\providecommand{\U}[1]{\protect \rule{.1in}{.1in}}
\newtheorem{theorem}{Theorem}
\newtheorem{definition}[theorem]{Definition}
\newtheorem{example}[theorem]{Example}
\newtheorem{lemma}[theorem]{Lemma}
\newtheorem{remark}[theorem]{Remark}
\begin{document}

 \title { An optimal control problem of forward-backward stochastic Volterra integral equations with state constraints}

\author{Qingmeng Wei\\
{\small
School of Mathematics and Statistics, Northeast Normal University, Changchun 130024,  China}\\
{\small {\it Email: qingmengwei@gmail.com}}\\
Xinling Xiao\\
{\small School of Mathematics, Shandong University, Jinan 250100, China.}\\
{\small{\it Email:xinlingxiao@mail.sdu.edu.cn}}}

\date{}

\maketitle

\noindent \textbf{Abstract.} This paper is devoted to the stochastic
optimal control problems for systems governed by forward-backward
stochastic Volterra integral equations (FBSVIEs, for short) with
state constraints. Using Ekeland's variational principle, we obtain
one kind of variational inequality. Then, by dual method, we derive
a stochastic maximum principle which gives the necessary conditions
for the optimal controls.
 \bigskip

\noindent \textbf{Keyword.} Forward-backward stochastic Volterra integral
equations (FBSVIEs); M-solution; Terminal perturbation method; State
constraints; Ekeland's variational principle; Stochastic maximum principle.

\section{Introductin}

As we known, with the exception of the applications in biology,
physical, etc, Volterra integral equations often appear in some
mathematical economic problems, for example, the relationships
between capital and investment which include memory effects (in
\cite{KM}, the present stock of capital depends on the history of
investment strategies over a period of time). And the simplest way
to describe such memory effects is through Volterra integral
operators. Based on the importance of Volterra integral equations,
we will study an stochastic optimal control problem about a class of
nonlinear stochastic equations$-$forward-backward stochastic
Volterra integral equations (FBSVIEs, for short). First we review
the backgrounds of these two kinds of Volterra integral equations:
forward stochastic Volterra integral equations (FSVIEs, for short)
and backward stochastic Volterra integral equations (BSVIEs, for
short).

Let $B(\cdot )$ be a standard $d$-dimensional Brownian motion defined on a
complete filtered probability space $(\Omega ,\mathcal{F},\mathbb{F},P)$,
where $\mathbb{F}=\{\mathcal{F}_{t}\}_{t\geq 0}$ is its natural filtration
generated by $B(\cdot )$ and augmented by all the $P$-null sets in $\mathcal{%
F}$. Consider the following FSVIE:
\begin{equation*}
X(t)=\varphi (t)+\int_{0}^{t}b(t,s,X(s))ds+\int_{0}^{t}\sigma
(t,s,X(s))dB_{s},\ t\in \lbrack 0,T].
\end{equation*}%
The readers may refer to \cite{BM, D, K, EPPP, PP} and the reference
cited therein, for the general results on FSVIEs. When studying the
stochastic optimal control problems for FSVIEs, we need one kind of
adjoint equation in order to derive a stochastic maximum principle.
This new adjoint equation is actually a linear BSVIE. This motivates
the investigation of the theory and applications of BSVIEs.

The following BSVIE was firstly introduced by Yong \cite{JY2}:
\begin{equation}
Y(t)=\psi(t)+{ \int_{t}^{T}}g(t,s,Y(s),Z(s,t))ds-{
\int_{t}^{T}}Z(t,s)dB_{s},\quad t\in \lbrack0,T],  \label{equ1}
\end{equation}
where $g:\Delta^{c}\times\mathbb{R}^{m}\times\mathbb{R}^{m\times
d}\times\Omega\rightarrow\mathbb{R}^{m}$ and $\psi:[0,T]\times\Omega
\rightarrow\mathbb{R}^{m}$ are given maps with $\Delta^{c}=\{(t,s)\in
\lbrack0,T]^{2}|t<s\}$. For each $t\in\lbrack0,T]$, $\psi(t)$ is $\mathcal{F}%
_{T}$-measurable (Lin \cite{JL} studied (\ref{equ1}) when $%
\psi(\cdot)\equiv\xi$). It is obvious that BSVIE is a natural
generalization of backward stochastic differential equation (BSDE,
for short). Comparing with BSDEs, BSVIE still has its own features
as listed in Yong \cite{JY2, JY}. One of the advantages is to study
time-inconsistent phenomenon. As shown in Laibson~\cite{DL} and
Strotz~\cite{RS}, in the real world, time-inconsistent preference
usually exists. At this point, one needs BSVIEs to generalize the
so-called stochastic differential utility in~\cite{DDLE} and dynamic
risk measures (see~\cite{PAFD, PCDFKM, SP3, TW}). Other applications
are in the non-exponential discounting problems (see Ekeland, Lazrak
\cite{IEAL}, Ekeland, Pirvu \cite{IETP}) and time-inconsistent
optimal control problem (see Yong \cite{JY3, JY20121}). In
\cite{JY3, JY20121}, Yong solved a time-inconsistent optimal control
problem by introducing a family of $N$-person non-cooperative
differential games, and got an equilibrium control which was
represented via a forward ordinary differential equation with a
backward Riccati-Volterra integral equation.

As stated in Yong \cite{JY4}, $\psi(t)$ in BSVIE (\ref{equ1}) could
represent the total (nominal) wealth of certain portfolio which
might be a combination of certain  contingent claims (for example,
European style, which is mature at time $T$, are usually only
$\mathcal {F}_T$-measurable), some current cash flows, positions of
stocks, mutual funds, and bonds, and so on, at time $t$. So, in
general, the position process $\psi(\cdot)$ is not necessarily
$\mathbb{F}$-adapted, but a stochastic process merely $\mathcal
{F}_T$-measurable. And Yong gave an example to make this point more
clear in \cite{JY4}. Focusing on this kind of position process
$\psi(\cdot)$, a class of convex/coherent dynamic risk measures was
introduced by Yong in \cite{JY4} to measure the risk dynamically.
Hence, one kind of control problem appears:  how to minimize the
risk, or how to maximize the utility. Wang, Shi \cite{WS} obtained a
maximum principle for FBSVIEs without state constraints. In this
paper, we study one kind of optimal control problem in which the
state equations are governed by the following FBSVIEs:
\begin{equation}
\left\{
\begin{array}{llll}
X(t)  =  f(t)+\int_{0}^{t}b(t,s,X(s),u(s))ds+{%
 \int_{0}^{t}}\sigma (t,s,X(s),u(s))dB_{s},\quad t\in \lbrack
0,T], &  \\
Y(t) =  \psi (t)+{ \int_{t}^{T}}%
g(t,s,X(s),Y(s),Z(s,t),u(s))ds-{ \int_{t}^{T}}Z(t,s)dB_{s}. &
\end{array}%
\right.  \label{equ1.1}
\end{equation}%
By choosing admissible controls $(u,\psi )$, we shall maximize the
following
objective functional%
\begin{equation}\label{cost}
\begin{array}{llll}
J(\psi ,u) :=  E [\int_{0}^{T}{ \int_{t}^{T}%
}l_{2}(t,s,X(s),Y(s),Z(s,t),u(s))dsdt &  \\
\qquad\qquad\quad+\int_{0}^{T}{ \int_{0}^{t}%
}l_{1}(t,s,X(s),u(s))dsdt+\int_{0}^{T}q(\psi
(t))dt+h(X(T))+\int_0^T k(Y(s))ds ]. &
\end{array}%
\end{equation}

Our formulation has the following new features:


(i) A strong assumption that $g(t,\cdot ,\cdot ,\cdot ,\cdot ,\cdot )$ in (%
\ref{equ1.1}) is $\mathcal{F}_{t}$-measurable is given in \cite{WS}. By
applying the duality principle introduced in Yong \cite{JY4}, we overcome
this restriction and assume a natural condition that $g(\cdot ,s,\cdot
,\cdot ,\cdot ,\cdot )$ is $\mathcal{F}_{s}$-measurable.

(ii) $\psi $ in (\ref{equ1.1}) is the terminal state of the BSVIE.
In our
formulation $\psi $ is also regarded as a control and our control is a pair $%
(u,\psi )$. In mathematical finance, such kind of controls often
appears as ``consumption-investment plan" (see \cite{pliska}). For
the recent progress
of studying this kind of control we refer the reader to \cite{NESPMC, SJSP, SJXZ, SJ2}. We also impose constraints on the state process $Y(\cdot )$ and $\psi $.

(iii) We consider the double integral in the cost functional (\ref{cost}) in theory. Some further studies on the applications are still under consideration. 

In order to solve this optimal control problem, we adopt the
terminal perturbation method, which was introduced in \cite{NESPMC, JI-2, JI,
SJSP, JIWEI, JIWEI-1, JIWEIZ, JW, JZ-0, SJXZ}. 
Recently, the dual
approach is applied to utility optimization problem with volatility
ambiguity (see \cite{Eji,Eji-1}). 
The basic idea is to perturb the terminal state $\psi $
and $u$ directly. By applying Ekeland's variational principle to
tackle the state constraints, we derive a stochastic maximum
principle which characterizes the optimal control. It is worth to
point out that in place of It\^{o}'s formula, we need two
duality principles established by Yong in \cite{JY4, JY} to obtain
the above results.

This paper is organized as follows. First, we recall some elements
of the theory of BSVIEs in Section 2. In Section 3, we formulate the
stochastic optimization problem and prove a stochastic maximum
principle. In Section 4, we give two examples. The first example is  associated with the model we studied. The last
example is about the `terminal' control $\psi (\cdot )$,

\section{Preliminaries}

Let $B(\cdot)$ be a $d$-dimensional Brownian motion defined on a complete
filtered probability space $(\Omega ,\mathcal{F},\mathbb{F},P)$, where $%
\mathbb{F}=\{\mathcal{F}_{t}\}_{t\ge0}$ is natural filtration generated by $%
B(\cdot)$ and augmented by all the $P$-null sets in $\mathcal{F}$, i.e.,
\begin{equation*}
\mathcal{F}_t=\sigma\{B_r,r\leq t\}\vee \mathcal{N}_P,\ t\in [0,T],
\end{equation*}
where $\mathcal{N}_P$ is the set of all $P$-null sets.

\subsection{Notations}

Here we keep on the definitions and notations for the spaces
introduced in Yong \cite{JY}.

For any $0\leq R< S\leq T$, we denote

$\left\{
\begin{array}{l}
\Delta[R,S]= \{(t,s)\in[R,S]^{2} |R\leq s\leq t\leq S \}, \\
\Delta^{c}[R,S]= \{(t,s)\in[R,S]^{2} |R\leq t< s\leq S \}\equiv
[R,S]^{2}\setminus\Delta[R,S].%
\end{array}
\right. $

For any $A,\ B\in \mathbb{R}^{m\times d}$, define the inner product $\langle
A,B\rangle :=$tr$[AB^{T}]$ and
\begin{equation*}
|A|^{2}=\sum_{j=1}^{d}|a_{j}|^{2}=\sum_{i=1}^{m}\sum_{j=1}^{d}a_{ij}^{2},%
\quad \forall A\equiv (a_{1},\cdots ,a_{d})\equiv (a_{ij})\in \mathbb{R}%
^{m\times d}.
\end{equation*}%
Let $S\in \lbrack 0,T]$, define the following spaces:\newline

\begin{itemize}
\item $L_{\mathcal{F}_{S}}^{p}(0,T):= \{\varphi :[0,T]\times \Omega
\rightarrow \mathbb{R}^{m} |\varphi (\cdot )\text{ is }\mathcal{B}%
([0,T])\otimes \mathcal{F}_{S}\text{-measurable and }
E\int_{0}^{T}|\varphi (t)|^{p}dt<\infty  \};$

\item $L_{\mathbb{F}}^{p}(0,T):= \{\varphi :[0,T]\times \Omega
\rightarrow \mathbb{R}^{m} |\varphi (\cdot )\text{ is }\mathbb{F}\text{%
-adapted and }E\int_{0}^{T}|\varphi (t)|^{p}dt<\infty  \};$

\item $L^{p}(0,T;L_{\mathbb{F}}^{2}(0,T)):= \{Z:[0,T]^{2}\times \Omega
\rightarrow \mathbb{R}^{m\times d} |\text{for almost all }t\in
\lbrack 0,T],Z(t,\cdot )\in L_{\mathbb{F}}^{2}(0,T),$

\begin{equation*} \int_{0}^{T}E ({ \int_{0}^{T}%
}|Z(t,s)|^{2}ds )^{\frac{p}{2}}dt<\infty  \}; \end{equation*}

\item $L_{\mathbb{F}}^{\infty }(0,T;\mathbb{R}^{n}):= \{\varphi
:[0,T]\times \Omega \rightarrow \mathbb{R}^{n} |\mathop{\rm esssup}%
\limits_{\omega \in \Omega }\mathop{\rm sup}\limits_{s\in \lbrack
0,T]}\varphi (s,\omega )<\infty  \};$

\item $L^{\infty }([0,T];L_{\mathbb{F}}^{\infty }(0,T;\mathbb{R}^{n\times
n}))$\newline
$:= \{Z(t,\cdot )\in L_{\mathbb{F}}^{\infty }(0,T;\mathbb{R}^{n\times n})%
 |\mathop{\rm esssup}\limits_{\omega \in \Omega }\mathop{\rm sup}%
\limits_{t\in \lbrack 0,T]}\mathop{\rm sup}\limits_{s\in \lbrack
0,T]}Z(t,s,\omega )<\infty  \};$

\item $\mathcal{H}^{p}[S,T]:=L_{\mathbb{F}}^{p}(S,T)\times L^{p}(S,T;L_{%
\mathbb{F}}^{2}(S,T)).$
\end{itemize}

\subsection{Backward Stochastic Volterra Integral Equations}

For the reader's convenience, we present some results of BSVIEs
which we will use later.

Consider the following integral equation
\begin{equation}
Y(t)=\psi (t)+{ \int_{t}^{T}}g(t,s,Y(s),Z(s,t))ds-{
\int_{t}^{T}}Z(t,s)dB_{s},\quad t\in \lbrack 0,T],  \label{equ2.1}
\end{equation}%
where $\psi (\cdot )\in L_{\mathcal{F}_{T}}^{2}(0,T).$

We assume:

\begin{description}
\item[$\left( \mathbf{H}\right) $] Let $g:\Omega \times \Delta
^{c}[0,T]\times \mathbb{R}^{m}\times \mathbb{R}^{m\times
d}\rightarrow \mathbb{R}^{m}$ be $\mathcal{F}_{T}\otimes
\mathcal{B}(\Delta ^{c}\times \mathbb{R}^{m}\times
\mathbb{R}^{m\times d})$-measurable such that $s\mapsto
g(t,s,y,\zeta )$ is $\mathbb{F}$-progressively measurable for all $
(t,y,\zeta )\in \lbrack 0,T]\times \mathbb{R}^{m}\times
\mathbb{R}^{m\times
d}$ and $$E\int_{0}^{T} ({ \int_{t}^{T}}%
|g(t,s,0,0)|ds )^{2}dt<\infty .$$ Moreover, $\forall (t,s)\in
\Delta ^{c}[0,T],\ (y,\zeta )$ and$\ (\bar{y},\bar{\zeta})\in
\mathbb{R}^{m}\times \mathbb{R}^{m\times d}$,
\begin{equation*}
|g(t,s,y,\zeta )-g(t,s,\bar{y},\bar{\zeta})|\leq L(t,s)(|y-\bar{y}|+|\zeta -%
\bar{\zeta}|),\ \text{ a.s.,}
\end{equation*}%
where $L:\Delta ^{c}[0,T]\rightarrow \mathbb{R}$ is a deterministic function
such that
\begin{equation*}
\mathop{\rm sup}\limits_{t\in \lbrack 0,T]}{ \int_{t}^{T}}%
L(t,s)^{2+\varepsilon }ds<\infty ,\text{ for some }\varepsilon >0.
\end{equation*}
\end{description}

The following $M$-solution of BSVIEs was introduced by Yong \cite{JY} .

\begin{definition}
Let $S\in \lbrack 0,T)$. A pair $(Y(\cdot ),Z(\cdot ,\cdot ))\in \mathcal{H}%
^{2}[S,T]$ is called an adapted $M$-solution of BSVIE (\ref{equ2.1}) on $%
[S,T]$ if (\ref{equ2.1}) holds in the usual It\^{o} sense for almost all $%
t\in \lbrack S,T]$ and, in addition, the following equation holds:
\begin{equation*}
Y(t)=E[Y(t)|\mathcal{F_{S}}]+{ \int_{S}^{t}}Z(t,s)dB_{s},\ \
a.e.,\ t\in \lbrack S,T].
\end{equation*}
\end{definition}

For the proof of the following wellposedness results, the readers
are referred to  Yong \cite{JY}.

\begin{lemma}
\label{th1} Let $\left( \mathbf{H}\right) $ holds. Then for any $\psi (\cdot
)\in L_{\mathcal{F}_{T}}^{2}(0,T)$, BSVIE (\ref{equ2.1}) admits a unique
adapted $M$-solution $(Y(\cdot ),Z(\cdot ,\cdot ))\in \mathcal{H}^{2}[0,T]$
on $[0,T]$. Moreover the following estimate holds: $\ \forall S\in \lbrack
0,T],$
\begin{equation}
\begin{array}{llll}
\Vert (Y(\cdot ),Z(\cdot ,\cdot ))\Vert _{\mathcal{H}^{2}[S,T]}^{2} & \equiv
& E\{ { \int_{S}^{T}}|Y(t)|^{2}dt+{ \int_{S}^{T}%
}{ \int_{S}^{T}}|Z(t,s)|^{2}dsdt\} &  \\
& \leq & CE\{ { \int_{S}^{T}}|\psi (t)|^{2}dt+E{%
 \int_{S}^{T}} ({ \int_{t}^{T}}|g_{0}(t,s)|ds )%
^{2}dt\} . &
\end{array}
\label{equ2.02}
\end{equation}
Let $\bar{g}:\Omega \times \lbrack 0,T]\times \lbrack 0,T]\times \mathbb{R}%
^{m}\times \mathbb{R}^{m\times d}\rightarrow \mathbb{R}^{m}$ also satisfies $%
\left( \mathbf{H}\right) $, $\psi(\cdot )\in L_{\mathcal{F}%
_{T}}^{2}(0,T)$ and $(\bar{Y}(\cdot ),\bar{Z}(\cdot ,\cdot ))\in \mathcal{H}%
^{2}[0,T]$ is the adapted $M$-solution of (\ref{equ2.1}) with $g$
and $\psi
(\cdot )$ replaced by $\bar{g}$ and $\psi(\cdot )$, respectively, then $%
\forall S\in \lbrack 0,T],$
\begin{eqnarray}
&&E\{ { \int_{S}^{T}}|Y(t)-\bar{Y}(t)|^{2}dt+{
\int_{S}^{T}}{ \int_{S}^{T}}|Z(t,s)-\bar{Z}(t,s)|^{2}dsdt
\} {}  \notag  \label{equ2.03} \\
&\leq &{}CE\{ { \int_{S}^{T}}|\psi (t)-\psi(t)|^{2}dt+%
{ \int_{S}^{T}} ({ \int_{t}^{T}}%
|g(t,s,Y(s),Z(s,t))-\bar{g}(t,s,Y(s),Z(s,t))|ds )^{2}dt\} .  \notag
\end{eqnarray}
\end{lemma}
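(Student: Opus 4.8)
The plan is to realize the $M$-solution as the unique fixed point of a contraction map built from the martingale representation theorem, and then to read both a priori estimates directly off the contraction estimate. Given a candidate pair $(y,z)\in\mathcal{H}^{2}[S,T]$, I would define, for each fixed $t$, the $\mathcal{F}_{T}$-measurable random variable
\[
\Psi(t):=\psi(t)+\int_{t}^{T}g(t,s,y(s),z(s,t))\,ds,
\]
and use martingale representation to write $\Psi(t)=E[\Psi(t)\mid\mathcal{F}_{S}]+\int_{S}^{T}Z(t,s)\,dB_{s}$. Setting $Y(t):=\Psi(t)-\int_{t}^{T}Z(t,s)\,dB_{s}=E[\Psi(t)\mid\mathcal{F}_{S}]+\int_{S}^{t}Z(t,s)\,dB_{s}$ defines a map $\Theta(y,z)=(Y,Z)$. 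By construction any fixed point satisfies both equation (\ref{equ2.1}) and the $M$-solution constraint $Y(t)=E[Y(t)\mid\mathcal{F}_{S}]+\int_{S}^{t}Z(t,s)\,dB_{s}$, so producing a unique fixed point suffices. Checking $\Theta$ maps $\mathcal{H}^{2}[S,T]$ into itself is routine given the integrability in $(\mathbf{H})$.

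The heart of the argument is the estimate for $\Theta(y,z)-\Theta(\bar{y},\bar{z})$. Writing $\Delta\Psi(t)$ for the difference of the two random variables above and applying It\^{o}'s isometry to the two representations separately, I would obtain the identity
\[
E|\Delta Y(t)|^{2}+E\int_{t}^{T}|\Delta Z(t,s)|^{2}\,ds=E|\Delta\Psi(t)|^{2},
\]
which controls the upper-triangular part of $Z$. The lower-triangular part is recovered from the $M$-solution constraint itself, since It\^{o}'s isometry gives $E\int_{S}^{t}|\Delta Z(t,s)|^{2}\,ds\leq E|\Delta Y(t)|^{2}$. Combining and integrating in $t$ yields $\|\Theta(y,z)-\Theta(\bar{y},\bar{z})\|_{\mathcal{H}^{2}[S,T]}^{2}\leq C\int_{S}^{T}E|\Delta\Psi(t)|^{2}\,dt$, and the Lipschitz hypothesis together with Cauchy--Schwarz bounds the right-hand side by $C\bigl(\sup_{t}\int_{t}^{T}L(t,s)^{2}\,ds\bigr)\,\|(y,z)-(\bar{y},\bar{z})\|_{\mathcal{H}^{2}[S,T]}^{2}$.

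The subtle point, and the one I expect to be the main obstacle, is that the $z(s,t)$ term coming from the generator produces the full lower-triangular norm of $\Delta z$ with no small prefactor (after swapping the order of integration over the region $S\le t\le s\le T$), so a naive contraction fails on $[0,T]$. This is exactly where the extra integrability $\sup_{t}\int_{t}^{T}L(t,s)^{2+\varepsilon}\,ds<\infty$ in $(\mathbf{H})$ is used: by H\"older's inequality $\sup_{t\in[S,T]}\int_{t}^{T}L(t,s)^{2}\,ds\leq M^{2/(2+\varepsilon)}(T-S)^{\varepsilon/(2+\varepsilon)}\to 0$ as $T-S\to 0$, so $\Theta$ is a genuine contraction on every sufficiently short interval $[T-\delta,T]$. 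I would then reach $[0,T]$ by partitioning into finitely many intervals of length $\delta$ and solving successively from the right, taking care that on each earlier sub-interval the already-determined values of $Y,Z$ to the right enter $\Psi(t)$ as known data while the lower-triangular values of $Z$ remain pinned by the global $M$-solution constraint.

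Finally, the two stated estimates fall out of the same machinery. Estimate (\ref{equ2.02}) follows by applying the above bound to the solution itself, using $|g(t,s,Y,Z)|\leq|g_{0}(t,s)|+L(t,s)(|Y(s)|+|Z(s,t)|)$ (with $g_{0}(t,s):=g(t,s,0,0)$) and absorbing the Lipschitz terms through the short-interval smallness, then concatenating over the partition to produce the constant $C$ on $[0,T]$. Estimate (\ref{equ2.03}) is precisely the contraction estimate read at the two fixed points $(Y,Z)$ and $(\bar{Y},\bar{Z})$: since the terminal data coincide here, one has $\Delta\Psi(t)=\int_{t}^{T}[g-\bar{g}](t,s,Y(s),Z(s,t))\,ds$, and feeding this into the bound for $\|\Theta(\cdot)-\Theta(\cdot)\|_{\mathcal{H}^{2}[S,T]}^{2}$ gives exactly the asserted inequality.
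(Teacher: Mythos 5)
Your proposal is correct, and it is essentially the proof the paper itself relies on: the paper does not prove Lemma \ref{th1} but refers to Yong's well-posedness paper, and your argument --- the fixed-point map built from martingale representation, the $M$-constraint pinning the lower-triangular part of $Z$ via $E\int_S^t|\Delta Z(t,s)|^2ds\le E|\Delta Y(t)|^2$, the H\"older/$(2+\varepsilon)$ trick making $\sup_{t}\int_t^T L(t,s)^2ds$ small on short intervals, and the backward concatenation with already-determined right-interval data --- is precisely that cited contraction argument. The two stated estimates then fall out exactly as you describe, so no gap remains beyond the routine verifications you flag.
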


Yong proved the following two duality principles for linear SVIE and
linear BSVIE in \cite{JY4, JY} respectively. And they play a key
role in deriving the maximum principle.

\begin{lemma}
\label{th11} Let $A_{i}(\cdot ,\cdot )\in L^{\infty }([0,T];L_{\mathbb{F}%
}^{\infty }(0,T;\mathbb{R}^{d\times d}))$ $(i=0,1\cdots d)$, $\varphi (\cdot
)\in L_{\mathbb{F}}^{2}(0,T;\mathbb{R}^{d})$, and $\psi (t)\in
L^{2}((0,T)\times \Omega ;\mathbb{R}^{d})$. Let $\xi (\cdot )\in L_{\mathbb{F%
}}^{2}(0,T;\mathbb{R}^{d})$ be the solution of the following FSVIE:
\begin{equation*}
\xi (t)=\varphi (t)+\int_{0}^{T}A_{0}(t,s)\xi (s)ds+{%
 \int_{0}^{t}}\sum_{i=1}^{d}A_{i}(t,s)\xi (s)dB_{i}(s),\quad
t\in \lbrack 0,T].
\end{equation*}
$(Y(\cdot ),Z(\cdot ,\cdot ))\in \mathcal{H}^{2}[0,T]$ be the adapted $M$%
-solution to the following BSVIE:
\begin{equation*}
Y(t)=\psi (t)+{ \int_{t}^{T}} [A_{0}(s,t)^{T}Y(s)+%
\sum_{i=1}^{d}A_{i}(s,t)^{T}Z_{i}(s,t) ]ds-{ \int_{t}^{T}}%
Z(t,s)dB_{s},\quad t\in \lbrack 0,T].
\end{equation*}%
Then the following relation holds:
\begin{equation*}
E\int_{0}^{T}\langle \xi (t),\psi (t)\rangle dt=E{%
 \int_{0}^{T}}\langle \varphi (t),Y(t)\rangle dt.
\end{equation*}
\end{lemma}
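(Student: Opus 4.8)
The plan is to establish the identity by substituting each of the two integral equations into the corresponding bilinear pairing and then matching the resulting expressions term by term. The three ingredients I would rely on are the Fubini theorem to interchange the order of the two time integrations, the It\^o isometry for the cross term of a pair of stochastic integrals, and the $M$-solution representation $Y(t)=E[Y(t)\,|\,\mathcal{F}_{0}]+\int_{0}^{t}Z(t,s)\,dB_{s}$ together with the martingale property of It\^o integrals. Throughout, the integrability hypotheses ($A_{i}\in L^{\infty}([0,T];L^{\infty}_{\mathbb{F}}(0,T))$, $\varphi,\xi\in L^{2}_{\mathbb{F}}(0,T)$, $\psi\in L^{2}$, and $(Y,Z)\in\mathcal{H}^{2}[0,T]$) guarantee that every interchange of integrals and every application of the isometry is legitimate.

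First I would read off from the forward SVIE the expression $\varphi(t)=\xi(t)-\int_{0}^{t}A_{0}(t,s)\xi(s)\,ds-\int_{0}^{t}\sum_{i=1}^{d}A_{i}(t,s)\xi(s)\,dB_{i}(s)$, and from the backward SVIE the expression $\psi(t)=Y(t)-\int_{t}^{T}[A_{0}(s,t)^{T}Y(s)+\sum_{i=1}^{d}A_{i}(s,t)^{T}Z_{i}(s,t)]\,ds+\int_{t}^{T}Z(t,s)\,dB_{s}$. Inserting the former into $E\int_{0}^{T}\langle\varphi(t),Y(t)\rangle\,dt$ and the latter into $E\int_{0}^{T}\langle\xi(t),\psi(t)\rangle\,dt$, both pairings contain the common term $E\int_{0}^{T}\langle\xi(t),Y(t)\rangle\,dt$, so it remains only to match the drift and diffusion contributions.

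For the drift contribution I would apply Fubini and relabel the two time variables, so that $E\int_{0}^{T}\langle\int_{0}^{t}A_{0}(t,s)\xi(s)\,ds,\,Y(t)\rangle\,dt=E\int_{0}^{T}\int_{t}^{T}\langle\xi(t),A_{0}(s,t)^{T}Y(s)\rangle\,ds\,dt$, which is precisely the $A_{0}$-term produced by $\psi$; this interchange is exactly where the Volterra (triangular) structure of the forward equation is converted into the adjoint domain of the backward equation. For the diffusion contribution I would first use the $M$-solution identity to replace $Y(t)$ by $E[Y(t)]+\int_{0}^{t}Z(t,s)\,dB_{s}$: the mean-zero stochastic integral pairs with the constant $E[Y(t)]$ to give zero expectation, and the It\^o isometry then yields $E\langle\int_{0}^{t}\sum_{i}A_{i}(t,s)\xi(s)\,dB_{i}(s),\int_{0}^{t}Z(t,s)\,dB_{s}\rangle=E\int_{0}^{t}\sum_{i}\langle\xi(s),A_{i}(t,s)^{T}Z_{i}(t,s)\rangle\,ds$; a second Fubini turns this into the $A_{i}$-term produced by $\psi$. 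The only leftover piece is $E\int_{0}^{T}\langle\xi(t),\int_{t}^{T}Z(t,s)\,dB_{s}\rangle\,dt$, and since $\xi(t)$ is $\mathcal{F}_{t}$-measurable while $\int_{t}^{T}Z(t,s)\,dB_{s}$ has zero conditional expectation given $\mathcal{F}_{t}$, this term vanishes, which closes the identity.

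The step I expect to be the main obstacle is the diffusion term, because it is here that the specific structure of the BSVIE $M$-solution (as opposed to an ordinary BSDE) is indispensable: the martingale representation that the $M$-solution enforces on $[0,t]$ is exactly what makes the It\^o isometry pair $\sum_{i}A_{i}(t,s)\xi(s)$ against the correct slice $Z_{i}(t,s)$ of the two-parameter field $Z$. Without invoking the $M$-solution property on the interval $[0,t]$, the stochastic integral against $Y(t)$ cannot be evaluated and the pairing does not close. Careful bookkeeping of the order of the relabelled integrals and of the column index $i$ of $Z$ is the principal technical difficulty; the remaining manipulations are routine given the stated estimates.
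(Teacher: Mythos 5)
Your argument is essentially correct, but note first that the paper contains no proof of this lemma at all: it defers to Theorem 5.1 of Yong \cite{JY}, and the computation you outline — substitute $\varphi$ and $\psi$ from the two equations, cancel the common term $E\int_0^T\langle\xi(t),Y(t)\rangle\,dt$, convert the drift by Fubini on the triangle, handle the diffusion via the $M$-solution representation $Y(t)=E[Y(t)]+\int_0^t Z(t,s)\,dB_s$ (with $\mathcal{F}_0$ trivial) and the cross-term It\^o isometry, and discard $E\int_0^T\bigl\langle\xi(t),\int_t^T Z(t,s)\,dB_s\bigr\rangle\,dt$ by adaptedness of $\xi$ — is in substance the cited proof. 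Your identification of the $M$-solution property as the indispensable ingredient for pairing $A_i(t,s)\xi(s)$ against the slice $Z_i(t,s)$ is exactly right.

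One point that you corrected silently should be made explicit. The lemma displays the forward drift as $\int_0^T A_0(t,s)\xi(s)\,ds$, whereas your computation uses $\int_0^t A_0(t,s)\xi(s)\,ds$; these coincide only under the convention, implicit throughout the paper (compare the variational equation (\ref{equ2.3}), where the same $\int_0^T$ appears but the coefficient $b_x^\ast(t,s)$ is defined only for $s\le t$), that $A_0(t,s)=0$ for $s>t$. This triangular support is in fact forced by the hypothesis $\xi(\cdot)\in L^2_{\mathbb{F}}(0,T;\mathbb{R}^d)$: an adapted solution cannot depend on future values $\xi(s)$, $s>t$, through the drift. If one insisted on a genuinely full-square $A_0$, your drift matching would leave the uncancelled lower-triangle residual $E\int_0^T\int_0^t\langle\xi(t),A_0(s,t)^T Y(s)\rangle\,ds\,dt$ — involving precisely the values $A_0(a,b)$ with $a<b$, which never enter the BSVIE — and the asserted identity would fail. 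So your proof is complete once you record this support assumption (or the zero-extension convention) explicitly; with it, every Fubini interchange and the isometry $E\bigl\langle\int_0^t\sum_i A_i(t,s)\xi(s)\,dB_i(s),\int_0^t Z(t,s)\,dB_s\bigr\rangle=E\int_0^t\sum_i\langle\xi(s),A_i(t,s)^T Z_i(t,s)\rangle\,ds$ is justified by the stated boundedness and square-integrability hypotheses.
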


\begin{lemma}
\label{th12} Let $A_{i}(\cdot,\cdot)\in L^{\infty}([0,T];L^{\infty}_{\mathbb{%
F}}(0,T;\mathbb{R}^{d\times d}))$ $(i=0,1\cdots d)$, $\varphi(\cdot)\in
L^{2}_{\mathbb{F}}(0,T;\mathbb{R}^{d})$, and $\psi(t)\in
L^{2}((0,T)\times\Omega;\mathbb{R}^{d})$. Suppose $(Y(\cdot),Z(\cdot,\cdot))%
\in\mathcal{H}^2(0,T)$ is the solution of the following linear BSVIE:
\begin{equation*}
Y(t)=\psi(t)+{ \int_{t}^{T}} [A_{0}(t,s)^{T}Y(s)+%
\sum_{i=1}^dA_{i}(t,s)^{T}Z_{i}(s,t) ]ds-{ \int_{t}^{T}}%
Z(t,s)dB_s,\quad t\in[0,T],
\end{equation*}
and $X(\cdot)$ is the solution of the following FSVIE:
\begin{equation*}
X(t)=\varphi(t)+\int_{0}^{T}A_{0}(s,t)X(s)ds+{
\int_{0}^{t}}X(s)\sum_{i=1}^dE[A_{i}(s,t)|\mathcal{F}_s]dB_{i}(s),\quad t\in[%
0,T].
\end{equation*}
Then the following relation holds:
\begin{equation*}
E\int_{0}^{T}\langle X(t),\psi(t)\rangle dt=E{
\int_{0}^{T}}\langle \varphi(t),Y(t)\rangle dt.
\end{equation*}
\end{lemma}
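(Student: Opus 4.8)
The plan is to establish the identity by inserting the two defining equations into the respective sides of the claimed relation and reducing both to one common bilinear expression. First I would use the linear BSVIE to write
$\psi(t) = Y(t) - \int_t^T[A_0(t,s)^T Y(s) + \sum_{i=1}^d A_i(t,s)^T Z_i(s,t)]ds + \int_t^T Z(t,s)dB_s$
and substitute it into $E\int_0^T\langle X(t),\psi(t)\rangle dt$; symmetrically I would use the FSVIE to write
$\varphi(t) = X(t) - \int_0^T A_0(s,t)X(s)ds - \int_0^t X(s)\sum_{i=1}^d E[A_i(s,t)\,|\,\mathcal F_s]dB_i(s)$
and substitute it into $E\int_0^T\langle\varphi(t),Y(t)\rangle dt$. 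Both expansions share the term $E\int_0^T\langle X(t),Y(t)\rangle dt$, which cancels, so it remains to match a Lebesgue-integral part carried by $A_0$, a stochastic part carried by $A_1,\dots,A_d$, and a residual pairing of $X$ against the backward martingale $\int_t^T Z(t,s)dB_s$.

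For the $A_0$ part I would apply Fubini's theorem to interchange the order of the $ds\,dt$ integrations and use the elementary identity $\langle a, A^T b\rangle = \langle Aa, b\rangle$ to align the kernels $A_0(t,s)$ and $A_0(s,t)$ over the triangle $\{s\ge t\}$. For the stochastic part I would invoke the It\^o isometry in cross-variation form, $E\langle\int F\,dB,\int G\,dB\rangle = E\int\langle F,G\rangle ds$, together with the $M$-solution representation $Y(t) = E[Y(t)\,|\,\mathcal F_0] + \int_0^t Z(t,s)dB_s$, which identifies $Z(t,\cdot)$ with the martingale integrand of $Y(t)$. The conditional expectation $E[A_i(s,t)\,|\,\mathcal F_s]$ in the diffusion coefficient of the FSVIE is exactly what makes this step close: once the forward stochastic integral is paired against $Y(t)$ and the expectation is taken, only the $\mathcal F_s$-measurable version of $A_i(s,t)$ survives, and after relabelling the dummy variables the two stochastic contributions coincide term by term.

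The main obstacle, and the step demanding the most care, is the Fredholm-type term $\int_0^T A_0(s,t)X(s)ds$ in the forward equation: because the $ds$-integral runs over all of $[0,T]$ rather than over $[0,t]$, the forward solution is in general anticipating, so the pairing $E\int_0^T\langle X(t),\int_t^T Z(t,s)dB_s\rangle dt$ need not vanish the way it would for an adapted integrand. I expect this residual pairing to be precisely what reconciles the leftover upper-triangle contribution of the $A_0$ part, and verifying this requires unwinding the martingale structure of the forward solution and pairing it, again through It\^o's isometry, against $Z$. I would therefore organise the bookkeeping so that the deterministic, stochastic, and anticipating pieces are matched separately; as an alternative route one could attempt to deduce the statement from Lemma \ref{th11} by transposing the kernels, but the swapped roles of the forward and backward equations together with the extra conditioning make the direct computation the cleaner path.
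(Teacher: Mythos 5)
You are in an unusual situation here: the paper itself gives no proof of Lemma~\ref{th12} — it refers the reader to Theorem~3.1 of \cite{JY4} — so your attempt can only be measured against the standard duality computation. Your skeleton is the right one: substitute both equations, cancel the common term $E\int_0^T\langle X(t),Y(t)\rangle dt$, match the $A_0$ parts by Fubini and transposition over the triangle, and handle the stochastic parts by the It\^o isometry together with the $M$-solution representation $Y(t)=EY(t)+\int_0^t Z(t,s)dB_s$ and the tower property, which is exactly what makes $E[A_i(s,t)|\mathcal F_s]$ appear. But the step you single out as the crux — that the residual pairing $E\int_0^T\langle X(t),\int_t^T Z(t,s)dB_s\rangle dt$ reconciles the upper-triangle leftover of the $A_0$ part — is not merely left unverified: it is false, because the lemma as printed, with the Fredholm drift $\int_0^T A_{0}(s,t)X(s)ds$, is itself false. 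Take $d=1$, $A_i\equiv 0$ for $i\geq 1$, $A_0\equiv 1$, and constant $\varphi(t)\equiv\varphi$, $\psi(t)\equiv\psi$. Then $Y(t)=\psi e^{T-t}$ with $Z\equiv 0$, so your residual pairing vanishes identically, while $X\equiv\varphi/(1-T)$ (for $T\neq 1$) gives $E\int_0^T X\psi\,dt=T\varphi\psi/(1-T)\neq\varphi\psi(e^T-1)=E\int_0^T\varphi\,Y(t)dt$ (e.g.\ $T=\tfrac12$). The geometry forces this: after relabelling, the backward drift $\int_t^T$ can only pair with a \emph{Volterra} forward drift $\int_0^t A_0(s,t)X(s)ds$; the upper limit $T$ in the statement is a misprint (observe that the paper's own adjoint equation (\ref{equ2.09}) correctly uses $\int_0^t$, while (\ref{equ2.10}) propagates the misprint), and the duality in the cited source is of Volterra type.

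There is also an internal inconsistency in your plan. If $X$ were genuinely anticipating, as you accept in your third paragraph, then the It\^o integral $\int_0^t E[A_i(s,t)|\mathcal F_s]X(s)dB_i(s)$ in the forward equation is not even classically defined, and your own matching of the stochastic contributions ``term by term'' uses $E\langle A_i(s,t)X(s),Z_i(t,s)\rangle=E\langle E[A_i(s,t)|\mathcal F_s]X(s),Z_i(t,s)\rangle$, which requires $X(s)$ to be $\mathcal F_s$-measurable — contradicting the anticipation you invoke elsewhere. Once the drift is corrected to $\int_0^t A_0(s,t)X(s)ds$, every difficulty evaporates: $X$ is adapted, so the residual pairing vanishes because $E[\int_t^T Z(t,s)dB_s\,|\,\mathcal F_t]=0$; the two triangle integrals match exactly under the relabelling $(t,s)\mapsto(s,t)$; and your remaining steps (Fubini, transposition $\langle a,A^Tb\rangle=\langle Aa,b\rangle$, isometry with the $M$-representation, tower property) close the proof. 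You deserve credit for sensing that the Fredholm term is exactly where the trouble lies — but the correct resolution is that it should not be there, not that an anticipating correction rescues it.
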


For the proofs of Lemmas \ref{th11} and \ref{th12}, the readers are
referred to Theorem 5.1 in \cite{JY} and Theorem 3.1 in \cite{JY4},
respectively.

\section{Stochastic optimization problem}

\subsection{ One kind of stochastic optimization problem}

Let $K,\ \bar{K}$ be a nonempty convex subset of $\mathbb{R}^{m}$, set
\begin{equation*}
U[0,T]= \{u:[0,T]\times \Omega \rightarrow \mathbb{R}^{m} |u(\cdot )\in L_{\mathbb{F}}^{2}(0,T),\ u(s )\in K,\ s\in[0,T],\ a.e., a.s.%
 \},
\end{equation*}%
and
\begin{equation*}
\mathcal{U}=\left\{ (\psi(\cdot),u(\cdot))|\psi (\cdot )\in L_{\mathcal{F}%
_{T}}^{2}(0,T),\ \psi (t )\in \bar{K},\ t\in[0,T],\,\ a.e., a.s.,\
u(\cdot)\in U[0,T]\right\} .
\end{equation*}

For any given control pair $(\psi(\cdot), u(\cdot))\in
\mathcal{U},$ we consider the following controlled integral
equation:
\begin{equation}
\left\{
\begin{array}{llll}
X(t) =  f(t)+\int_{0}^{t}b(t,s,X(s),u(s))ds+{%
 \int_{0}^{t}}\sigma (t,s,X(s),u(s))dB_{s},\ \quad t\in
\lbrack 0,T], &  \\
Y(t)  =  \psi (t)+{ \int_{t}^{T}}%
g(t,s,X(s),Y(s),Z(s,t),u(s))ds-{ \int_{t}^{T}}Z(t,s)dB_{s}, &
\end{array}%
\right.  \label{equ2.2}
\end{equation}%
where $f(\cdot )\in L_{\mathbb{F}}^{2}(0,T)$ and
$
b:\Omega \times \Delta \lbrack 0,T]\times \mathbb{R}^{m}\times K\rightarrow
\mathbb{R}^{m},  \sigma :\Omega \times \Delta \lbrack 0,T]\times \mathbb{%
R}^{m}\times K\rightarrow \mathbb{R}^{m\times d},   \\
g:\Omega \times \Delta ^{c}[0,T]\times \mathbb{R}^{m}\times \mathbb{R}%
^{m}\times \mathbb{R}^{m\times d}\times K\rightarrow \mathbb{R}^{m}.
$

For each $(\psi(\cdot),u(\cdot))\in \mathcal{U}$, define the
following objective functional:
\begin{equation}
\begin{array}{llll}
J(\psi(\cdot),u(\cdot))  :=  E [\int_{0}^{T}{ \int_{t}^{T}%
}l_{2}(t,s,X(s),Y(s),Z(s,t),u(s))dsdt &  \\
 \qquad\qquad\qquad\qquad +\int_{0}^{T}{ \int_{0}^{t}%
}l_{1}(t,s,X(s),u(s))dsdt+\int_{0}^{T}q(\psi
(t))dt+h(X(T))+\int_0^T k(Y(t))dt ], &
\end{array}
\label{equcost}
\end{equation}%
where $l_{1}:\Delta \lbrack 0,T]\times \mathbb{R}^{m}\times K\rightarrow
\mathbb{R},\ \ l_{2}:\Delta ^{c}[0,T]\times \mathbb{R}^{m}\times \mathbb{R}%
^{m}\times \mathbb{R}^{m\times d}\times K\rightarrow \mathbb{R},\ q:\mathbb{R%
}^{m}\rightarrow \mathbb{R},\ h:\mathbb{R}^{m}\rightarrow \mathbb{R},\ k:%
\mathbb{R}^{m}\rightarrow \mathbb{R}.$ \newline We assume:
\begin{description}
\item[$\left( \mathbf{A_1}\right) $] $b,\ \sigma,\ g,\ l_1,\ l_2,\ q,\ h,\ k $ are continuous in their
argument, and continuously differentiable in the variables
$(x,y,\zeta,u)$; \end{description}
\begin{description}
\item[$\left( \mathbf{A_2}\right) $]  the derivatives of $b,\ \sigma,\ g,\ h$ in $(x,y,\zeta,u)$ are
bounded;\end{description}
\begin{description}
\item[$\left( \mathbf{A_3}\right) $]  the derivatives of $l_1, \ l_2$ in $(x,y,\zeta,u)$ are bounded by $%
C(1+|x|+|y|+|\zeta|+|u|)$, and the derivatives of $q,\ h,\ k$ in $x$
are bounded by $C(1+|x|)$;\end{description}
\begin{description}
\item[$\left( \mathbf{A_4}\right) $]  $g(t,s,x,y,\zeta ,u)$ is $\mathcal{F}_{T}\otimes \mathcal{B}(\Delta
^{c}[0,T]\times \mathbb{R}^{m}\times \mathbb{R}^{m}\times \mathbb{R}%
^{m\times d}\times K)$-measurable such that $s\mapsto g(t,s,x,y,\zeta ,u)$, $%
s\mapsto g_{i}(t,s,x,y,\zeta ,u)$ are $\mathbb{F}$-progressively measurable
for all $(t,x,y,\zeta ,u)\in \lbrack 0,T]\times \mathbb{R}^{m}\times \mathbb{%
R}^{m}\times \mathbb{R}^{m\times d}\times K$, $i=x,\ y,\ \zeta ,\ u$, and $E{%
 \int_{0}^{T}}( { \int_{t}^{T}}%
|g(t,s,0,0,0,0)|ds) ^{2}dt<\infty .$\end{description}

Under the assumptions $\left( \mathbf{A_1}\right) $, $\left(
\mathbf{A_2}\right)$ and $\left( \mathbf{A_4}\right) $, for any
given $u(\cdot)\in U[0,T]$, the FSVIE in (\ref{equ2.2}) has
a unique solution $X^{u}(\cdot )\in L_{\mathbb{F}}^{2}(0,T)$. For
any given $\psi (\cdot )\in L_{\mathcal {F}_T}^{2}(0,T)$, the BSVIE
has a unique M-solution $(Y^{\psi ,u}(\cdot ),Z^{\psi ,u}(\cdot
,\cdot ))\in \mathcal{H}^{2}[0,T]$ associated with $(\psi (\cdot
),u(\cdot))$. Hence, there exists a unique triple
$(X^{u}(\cdot ),Y^{\psi ,u}(\cdot ),Z^{\psi ,u}(\cdot ,\cdot ))$
satisfying (\ref{equ2.2}).

Now we formulate the optimization problem:
\begin{equation}
\begin{array}{l}
\mbox{Maximum}\hskip2.2cmJ(\psi(\cdot),u(\cdot)) \\
\mbox{subject to}\qquad (\psi(\cdot),u(\cdot))\in
\mathcal{U},\ \int_0^TEY^{\psi ,u}(s)ds=a,
\\
\qquad \qquad \qquad \ \ EY^{\psi,u}(t)=\rho (t),\ a.e.%
\end{array}
\label{equ2.0}
\end{equation}%
where $\rho :[0,T]\rightarrow \mathbb{R}^{m}$ is continuous and satisfies $\int_0^T\rho(t)dt=a$, ${
\int_{0}^{T}}|\rho (t)|^{2}dt<\infty .$

\subsection{Variational equation}

For $(\psi ^{1}(\cdot),u^{1}(\cdot)),(\psi
^{2}(\cdot),u^{2}(\cdot))\in \mathcal{U}$, we define a metric
in $\mathcal{U}$ by
\begin{equation*}\begin{array}{llll}
d((\psi ^{1}(\cdot),u^{1}(\cdot)),(\psi
^{2}(\cdot),u^{2}(\cdot))):=( E \int_{0}^{T}|\psi
^{1}(s)-\psi ^{2}(s)|^{2}ds+E\int_{0}^{T}|u^{1}(s)-u^{2}(s)|^{2}ds) ^{\frac{1}{2}}.
\end{array}\end{equation*}%
It is obvious that $(\mathcal{U},d(\cdot ,\cdot ))$ is a complete metric
space.

Let $(\psi ^{\ast }(\cdot),u^{\ast }(\cdot))$ be an optimal control pair to problem (\ref%
{equ2.0}) and $(X^{\ast }(\cdot ),Y^{\ast }(\cdot ),Z^{\ast }(\cdot
,\cdot
)) $ be the corresponding state processes of (\ref{equ2.2}).  For any $%
(\psi(\cdot),u(\cdot)\in \mathcal {U},\ 0\leq p\leq 1,$ using the convexity of
 $\mathcal{U}$, we have
\begin{equation*}
\begin{array}{llll}
(\psi ^{p}(\cdot),u^{p}(\cdot))  :=  ((1-p)\psi ^{\ast }(\cdot)+p\psi(\cdot),(1-p)u^{\ast }(\cdot)+pu(\cdot)) &  \\
\qquad\qquad\qquad = (\psi ^{\ast
}(\cdot)+p(\psi(\cdot)-\psi^{\ast }(\cdot)),u^{\ast
}(\cdot)+p(u(\cdot)-u^{\ast }(\cdot)))\in\mathcal
{U}. &
\end{array}%
\end{equation*}%
We denote $(X^{p}(\cdot ),Y^{p}(\cdot ),Z^{p}(\cdot ,\cdot ))$ by
the solution of the corresponding FBSVIE (\ref{equ2.2}) with
$(\psi(\cdot),$ $u(\cdot))=(\psi
^{p}(\cdot),u^{p}(\cdot))$.

Consider the following FBSVIE
\begin{equation}  \label{equ2.3}
\left\{
\begin{array}{llll}
\delta X(t) =  \int_{0}^{t}b^\ast_{u}(t,s)\hat{u}(s)ds +{%
 \int_{0}^{t}}\sigma^\ast_{u}(t,s) \hat{u}(s)dB_s &  \\
\qquad\qquad+\int_{0}^{T}b^\ast_{x}(t,s)\delta X(s)ds +{
\int_{0}^{t}}\sigma^\ast_{x}(t,s)\delta X(s)dB_s,\qquad t\in[0,T], &  \\
\delta Y(t)  =  \hat{\psi}(t)+{ \int_{t}^{T}}[%
g^\ast_{x}(t,s)\delta X(s) +g^\ast_{y}(t,s)\delta Y(s)
+g^\ast_{\zeta}(t,s)\delta Z(s,t) &  \\
 \qquad\qquad +g^\ast_{u}(t,s)\hat{u}(s)]ds -{ \int_{t}^{T}}\delta
Z(t,s)dB_s, &
\end{array}%
\right.
\end{equation}
where $\hat{\psi}(s)=\psi(s)-\psi^\ast(s),\ \hat{u}(s)=u(s)-u^\ast(s),\
f^\ast_k(t,s)=f_{k}(t,s,X^{\ast}(s),Y^{\ast}(s),Z^\ast(s,t),u^{\ast}(s)),\
k=x,\ y,\ \zeta,\ u,$ $f=b,\ \sigma,\ g,$ respectively. This
equation is called the variational equation.

From Lemma \ref{th1} and $\left(\mathbf{A}_{1}\right),\ \left(\mathbf{A}%
_{2}\right),\ \left(\mathbf{A}_{4}\right)$, it's easy to check that the
variational equation (\ref{equ2.3}) has a unique solution $(\delta
X(\cdot),\delta Y(\cdot),\delta Z(\cdot,\cdot))\in L^2_\mathbb{F}(0,T)\times
\mathcal{H}^2[0,T].$

Now we define
\begin{equation*}
\begin{array}{l}
\tilde{X}^{p}(t)=p^{-1}[X^{p}(t)-X^{\ast }(t)]-\delta X(t), \\
\tilde{Y}^{p}(t)=p^{-1}[Y^{p}(t)-Y^{\ast }(t)]-\delta Y(t), \\
\tilde{Z}^{p}(t,s)=p^{-1}[Z^{p}(t,s)-Z^{\ast }(t,s)]-\delta Z(t,s).%
\end{array}%
\end{equation*}%
To simplify the proof, we use the following notations:
\begin{equation*}
\begin{array}{l}
f^{p}(t,s)=f(t,s,X^{p}(s),Y^{p}(s),Z^{p}(s,t),u^{p}(s)), \\
f^{\ast }(t,s)=f(t,s,X^{\ast }(s),Y^{\ast }(s),Z^{\ast }(s,t),u^{\ast}(s)),%
\end{array}%
\end{equation*}%
where $f=b, \sigma, g,$ respectively. Similar to the arguments in
\cite {SJSP, WS}, we have the following lemma:

\begin{lemma}
\label{lem2} Assume that $\left( \mathbf{A}_{1}\right) ,\ \left( \mathbf{A}%
_{2}\right) ,\ \left( \mathbf{A}_{4}\right) $ hold. We have
\begin{equation*}
\begin{array}{l}
\lim\limits_{p\rightarrow 0}E\int_{0}^{T}|\tilde{X}%
^{p}(t)|^{2}dt=0, \
\lim\limits_{p\rightarrow 0}E\int_{0}^{T}|\tilde{Y}%
^{p}(t)|^{2}dt=0, \\
\lim\limits_{p\rightarrow 0}E\int_{0}^{T}{
\int_{0}^{T}}|\tilde{Z}^{p}(t,s)|^{2}dsdt=0.%
\end{array}%
\end{equation*}
\end{lemma}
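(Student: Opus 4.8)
The plan is to derive the linear forward and backward Volterra equations solved by the remainders $\tilde X^{p}$, $\tilde Y^{p}$, $\tilde Z^{p}$ and then to control their norms by inhomogeneous free terms that vanish as $p\to 0$, invoking the a priori estimate for FSVIEs for the forward part and the stability estimate (\ref{equ2.02}) of Lemma \ref{th1} for the backward part. As a preliminary step I would record the continuous dependence of the state on the control. Since $(E\int_{0}^{T}|\psi^{p}-\psi^{\ast}|^{2}ds)^{1/2}=p(E\int_{0}^{T}|\hat\psi|^{2}ds)^{1/2}$ and the analogous identity for $u^{p}-u^{\ast}$ both tend to $0$, these same estimates (together with the Lipschitz continuity coming from the bounded derivatives in $(\mathbf{A}_{2})$) give $X^{p}\to X^{\ast}$ in $L^{2}_{\mathbb{F}}(0,T)$, and then, because the forward convergence feeds the generator of the backward equation, $(Y^{p},Z^{p})\to(Y^{\ast},Z^{\ast})$ in $\mathcal{H}^{2}[0,T]$.

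For the forward remainder I would expand by the mean value theorem with integral remainder: for $f=b,\sigma$, $p^{-1}[f^{p}(t,s)-f^{\ast}(t,s)]=\bar f^{p}_{x}(t,s)(\tilde X^{p}(s)+\delta X(s))+\bar f^{p}_{u}(t,s)\hat u(s)$, where $\bar f^{p}_{x}(t,s):=\int_{0}^{1}f_{x}(t,s,X^{\ast}(s)+\lambda(X^{p}(s)-X^{\ast}(s)),u^{\ast}(s)+\lambda p\hat u(s))\,d\lambda$ and $\bar f^{p}_{u}$ is defined analogously. Subtracting the forward line of (\ref{equ2.3}), the contributions of the exact derivatives cancel and $\tilde X^{p}$ is seen to solve a linear FSVIE with coefficients $\bar b^{p}_{x},\bar\sigma^{p}_{x}$ (bounded uniformly in $p$ by $(\mathbf{A}_{2})$) and free term $A^{p}(t)=\int_{0}^{t}[\bar b^{p}_{x}-b^{\ast}_{x}]\delta X\,ds+\int_{0}^{t}[\bar b^{p}_{u}-b^{\ast}_{u}]\hat u\,ds$ plus the two stochastic-integral analogues. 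The FSVIE a priori estimate then gives $E\int_{0}^{T}|\tilde X^{p}(t)|^{2}dt\le C\,E\int_{0}^{T}|A^{p}(t)|^{2}dt$ with $C$ independent of $p$, so the forward claim reduces to $E\int_{0}^{T}|A^{p}(t)|^{2}dt\to0$.

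The same expansion applied to $g$, subtracted from the backward line of (\ref{equ2.3}), shows that $(\tilde Y^{p},\tilde Z^{p})$ solves a linear BSVIE whose Lipschitz coefficients $\bar g^{p}_{y},\bar g^{p}_{\zeta}$ are uniformly bounded by $(\mathbf{A}_{2})$ and whose free integrand is $\bar g^{p}_{x}(t,s)\tilde X^{p}(s)+G^{p}(t,s)$ with $G^{p}(t,s):=[\bar g^{p}_{x}-g^{\ast}_{x}]\delta X(s)+[\bar g^{p}_{y}-g^{\ast}_{y}]\delta Y(s)+[\bar g^{p}_{\zeta}-g^{\ast}_{\zeta}]\delta Z(s,t)+[\bar g^{p}_{u}-g^{\ast}_{u}]\hat u(s)$. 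The perturbed terminal data cancel exactly, and since $\bar g^{p}_{x}$ is bounded while $\tilde X^{p}\to0$ in $L^{2}$ by the forward step, the contribution $\bar g^{p}_{x}\tilde X^{p}$ is already negligible. Applying estimate (\ref{equ2.02}) of Lemma \ref{th1} yields $\|(\tilde Y^{p},\tilde Z^{p})\|_{\mathcal{H}^{2}[0,T]}^{2}\le C\,E\int_{0}^{T}\left(\int_{t}^{T}|\bar g^{p}_{x}\tilde X^{p}(s)+G^{p}(t,s)|\,ds\right)^{2}dt$, reducing the remaining two claims to the vanishing of this right-hand side.

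The only genuinely delicate point, and the main obstacle, is to show that $A^{p}$ and the integrand $G^{p}$ vanish in the relevant norms. I would argue by dominated convergence. Along any sequence $p\to0$, the preliminary $L^{2}$-convergence of the states lets one pass to a subsequence along which the intermediate arguments $X^{\ast}+\lambda(X^{p}-X^{\ast})$, $Y^{\ast}+\lambda(Y^{p}-Y^{\ast})$, $Z^{\ast}+\lambda(Z^{p}-Z^{\ast})$ converge almost everywhere; continuity of the derivatives in $(\mathbf{A}_{1})$ then forces $\bar b^{p}_{x}\to b^{\ast}_{x}$, $\bar g^{p}_{x}\to g^{\ast}_{x}$, and so on, a.e., while the uniform bounds in $(\mathbf{A}_{2})$ supply a majorant times the fixed $L^{2}$ processes $\delta X,\delta Y,\delta Z,\hat u$. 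Hence each error factor $[\bar f^{p}_{i}-f^{\ast}_{i}]$ times its companion process tends to $0$ in $L^{2}$, giving $E\int_{0}^{T}|A^{p}|^{2}dt\to0$ directly; for the backward term one must respect the Volterra double-integral norm in (\ref{equ2.02}), estimating $\int_{t}^{T}|G^{p}(t,s)|\,ds$ in $L^{1}_{s}$ before squaring and integrating in $t$, which the same uniform bound makes legitimate. A standard argument (every subsequence admits a further subsequence along which the limit holds) upgrades the a.e. convergence to convergence as $p\to0$, establishing all three limits simultaneously.
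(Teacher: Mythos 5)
Your proposal is correct and follows essentially the same route as the paper: expand $b,\sigma,g$ by the mean value theorem with integral remainder, observe that the remainders solve linear FSVIEs/BSVIEs whose free terms are differences of averaged and exact derivatives multiplied by $\delta X,\,\delta Y,\,\delta Z,\,\hat u$, bound the forward remainder by an a priori estimate (the paper does this inline via a weighted $e^{-rt}$ norm where you cite the standard Gronwall-type bound), apply estimate (\ref{equ2.02}) of Lemma \ref{th1} with $\psi=0$ and the free term as $g_0$, and conclude by dominated convergence plus the forward result. Your write-up is in fact slightly more careful than the paper's on the one delicate point, namely justifying the a.e.\ convergence needed for dominated convergence via the continuous dependence of $(X^{p},Y^{p},Z^{p})$ on $(\psi^{p},u^{p})$ and the subsequence-of-subsequences argument, which the paper leaves implicit.
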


\begin{proof}
\textbf{(1).} We prove the first equality. By the FSVIEs in
(\ref{equ2.2}) and (\ref{equ2.3}), we have
\begin{equation*}
\begin{array}{ll}
\tilde{X}^{p}(t) = \int_{0}^{t}\frac{1}{p}[b^p(t,s)
{} -b^\ast(t,s) -pb^\ast_{x}(t,s)\delta X(s)
-pb^\ast_{u}(t,s)\hat{u}(s)]ds {}
\nonumber\\
\qquad\qquad+ \int_{0}^{t}\frac{1}{p}[\sigma^p(t,s)
{} -\sigma^\ast(t,s) -p\sigma^\ast_{x}(t,s)\delta X(s)
-p\sigma^\ast_{u}(t,s)\hat{u}(s)]dB_s {} \nonumber\\
\qquad\ \
=\int_{0}^{t}[A^p_1(t,s)\tilde{X}^{p}(s)+D_1^p(t,s)]ds+\int_{0}^{t}[A^p_2(t,s)\tilde{X}^{p}(s)+D_2^p(t,s)]dB_s,
\end{array}\end{equation*} where
$$
\begin{array}
[c]{l}%
A^{p}_1(t,s):={ \int_{0}^{1}}b_{x}(t,s,L(p, \lambda,s),M(p,\lambda,s))d\lambda,\ A^{p}_2(t,s):={ \int_{0}^{1}}\sigma_{x}(t,s,L(p,\lambda,s),M(p,\lambda,s))d\lambda,\\
B^{p}_1(t,s):={ \int_{0}^{1}}b_{u}(t,s,L(p,\lambda,s),M(p,\lambda,s))d\lambda,\ B^{p}_2(t,s):={ \int_{0}^{1}}\sigma_{u}(t,s,L(p,\lambda,s),M(p,\lambda,s))d\lambda,\\
D^{p}_1(t,s):=[A^{p}_1(t,s)-b^{\ast}_{x}(t,s)]\delta
X(s)+[B^{p}_1(t,s)-b^{\ast}_{u}(t,s)]\hat{u}(s),\\
D^{p}_2(t,s):=[A^{p}_2(t,s)-\sigma^{\ast}_{x}(t,s)]\delta
X(s)+[B^{p}_2(t,s)-\sigma^{\ast}_{u}(t,s)]\hat{u}(s),
\end{array}$$ and
$$
\begin{array}
[c]{l}%
L(p,\lambda,s):=X^{\ast}(s)+\lambda (X^{p}(s)- X^\ast(s)),\\
M(p,\lambda,s):=u^{\ast}(s)+\lambda (u^{p}(s)-u^{\ast}(s)).
\end{array}
$$
Therefore, we have $$
 \begin{array}{llll}
E\int_{0}^{T}e^{-rt}|\tilde{X}^{p}(t)|^2dt &\\
 \leq
CE\int_{0}^{T}e^{-rt}\int_{0}^{T}(|A_1^p(t,s)|^2+|A_2^p(t,s)|^2)|\tilde{X}^{p}(s)|^2dsdt &\\
\quad+CE\int_{0}^{T}e^{-rt}\int_{0}^{T}(|D_1^p(t,s)|^2+|D_2^p(t,s)|^2)dsdt &\\
 \leq \frac{C}{r}E\int_{0}^{T}e^{-rt}|\tilde{X}^{p}(t)|^2dt+CE\int_{0}^{T}\int_{0}^{T}e^{-rt}(|D_1^p(t,s)|^2+|D_2^p(t,s)|^2)dsdt.\\
\end{array}$$
By choosing a proper $r$ such that $\frac{C}{r}<1$, we have
$$E\int_{0}^{T}|\tilde{X}^{p}(t)|^2dt\\
\leq
CE\int_{0}^{T}\int_{0}^{T}e^{r(T-t)}(|D_1^p(t,s)|^2+|D_2^p(t,s)|^2)dsdt.$$
 Applying Lebesgue's dominated convergence theorem, we have
$$\lim\limits_{p\rightarrow0}E\int_{0}^{T}\int_{0}^{T}|D^{p}_i(t,s)|^{2}dsdt=
0,\ i=1,2.$$ So,
$$\lim\limits_{p\rightarrow0}E\int_{0}^{T}|\tilde{X}^{p}(t)|^2dt=0.$$

\textbf{(2).}  By the BSVIEs in (\ref{equ2.2}) and
(\ref{equ2.3}), we have
\begin{eqnarray*}\begin{array}{llll}
\tilde{Y}^{p}(t) = { \int_{t}^{T}}\frac{1}{p}[g^p(t,s)
{} -g^\ast(t,s) -pg^\ast_{x}(t,s)\delta X(s)-pg^\ast_{y}(t,s)\delta
Y(s) -pg^\ast_{\zeta}(t,s)\delta Z(s,t)& {}
\nonumber\\
\qquad\qquad-pg^\ast_{u}(t,s)\hat{u}(s)]ds
-{ \int_{t}^{T}}\tilde{Z}^{p}(t,s)dB_s,\quad t\in[0,T].
\end{array}\end{eqnarray*} Let
$$
\begin{array}
[c]{l}%
N(p,\lambda,s):=Y^{\ast}(s)+\lambda (Y^{p}(s)- Y^\ast(s)),\\
P(p,\lambda,t,s):=Z^{\ast}(s,t)+\lambda (Z^{p}(s,t)- Z^\ast(s,t))\\
\end{array}
$$
and
$$
\begin{array}
[c]{l}%
C^{p}_1(t,s):={ \int_{0}^{1}}g_{x}(t,s,L(p,\lambda,s),N(p,\lambda,s),P(p,\lambda,t,s),M(p,\lambda,s))d\lambda,\\
C^{p}_2(t,s):={ \int_{0}^{1}}g_{y}(t,s,L(p,\lambda,s),N(p,\lambda,s),P(p,\lambda,t,s),M(p,\lambda,s))d\lambda,\\
C^{p}_3(t,s):={ \int_{0}^{1}}g_{\zeta}(t,s,L(p,\lambda,s),N(p,\lambda,s),P(p,\lambda,t,s),M(p,\lambda,s))d\lambda,\\
C^{p}_4(t,s):={ \int_{0}^{1}}g_{u}(t,s,L(p,\lambda,s),N(p,\lambda,s),P(p,\lambda,t,s),M(p,\lambda,s))d\lambda,\\
D^{p}(t,s):=[C^{p}_1(t,s)-g^{\ast}_{x}(t,s)]\delta
X(s)+[C^{p}_2(t,s)-g^{\ast}_{y}(t,s)]\delta
Y(s)\\
\qquad\qquad+[C^{p}_3(t,s)-g^{\ast}_{\zeta}(t,s)]\delta
Z(s,t)+[C^{p}_4(t,s)-g^{\ast}_{u}(t,s)]\hat{u}(s).
\end{array}
$$
Thus,
\begin{eqnarray*}\begin{array}{llll}
\tilde{Y}^{p}(t) =
{ \int_{t}^{T}}[C^{p}_1(t,s)\tilde{X}^{p}(s)+C^{p}_2(t,s)\tilde{Y}^{p}(s)
+C^{p}_3(t,s)\tilde{Z}^{p}(s,t)+D^{p}(t,s)]ds &{}
\nonumber\\
 \qquad\qquad-{ \int_{t}^{T}}\tilde{Z}^{p}(t,s)dB_s, \quad
t\in[0,T]. \end{array}\end{eqnarray*}
In Lemma \ref{th1}, we take
$\psi=0,\ g_{0}(t,s)=C^{p}_1(t,s)\tilde{X}^{p}(s)+D^{p}(t,s). $ Then
\begin{eqnarray*}\begin{array}{llll}\|\tilde{Y}^{p}(t),\tilde{Z}^{p}(t,s))\|_{\mathcal {H}^{2}[0,T]}^{2}
 =  E[\int_{0}^{T}|\tilde{Y}^{p}(t)|^{2}ds+
\int_{0}^{T}\int_{0}^{T}|\tilde{Z}^{p}(t,s)|^{2}dsdt
] & {}
\nonumber\\
  \qquad \qquad \qquad \qquad \qquad\ \leq
CE\int_{0}^{T}[({ \int_{t}^{T}}|C_1^{p}(t,s)\tilde{X}^{p}(s)|ds)^{2}+({ \int_{t}^{T}}|D^{p}(t,s)|ds)^{2}]dt.
\end{array}\end{eqnarray*} Applying Lebesgue's dominated convergence
theorem, we have
$$\lim\limits_{p\rightarrow0}E\int_{0}^{T}({ \int_{t}^{T}}|D^{p}(t,s)|ds)^{2}dt\rightarrow
0.$$ Using the obtained first result, we can get the desired
results.
\end{proof}

\subsection{Variational inequality}

In this subsection, using Ekeland's variational principle (\cite{IE}), we get
the variational inequality.

\begin{lemma}[Ekeland's variational principle]
\label{lem1} Let $(V,d(\cdot,\cdot))$ be a complete metric space and $%
F(\cdot): V\rightarrow R$ be a proper lower semi-continuous function bounded
from below. Suppose that for some $\varepsilon> 0$, there exists $u \in V$
satisfying $F(u)\leq \inf\limits_{v\in V}F(v)+\varepsilon. $ Then there
exists $u_{\varepsilon}\in V$ such that
\begin{equation*}
\begin{array}{ll}
\rm(i)\quad & F(u_{\varepsilon})\leq F(u), \\
\rm(ii)\quad & d(u,u_{\varepsilon})\leq\varepsilon, \\
\rm(iii)\quad & F(v)+\sqrt{\varepsilon}d(v,u_{\varepsilon})\ge
F(u_{\varepsilon}), \quad\forall v\in V.%
\end{array}%
\end{equation*}
\end{lemma}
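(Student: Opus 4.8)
The plan is to prove this classical result (Ekeland \cite{IE}) by introducing a partial order on $V$ and extracting from it a $\preceq$-minimal element by an iterative approximate minimization. Using the constant that appears in (iii), define
\[
w \preceq v \iff F(w) + \sqrt{\varepsilon}\, d(w,v) \le F(v).
\]
Reflexivity is clear, antisymmetry follows by adding the two defining inequalities, and transitivity follows from the triangle inequality, so $\preceq$ is a genuine partial order. For $v \in V$ set $S(v) = \{ w \in V : w \preceq v \}$; since $F$ is lower semi-continuous and $d(\cdot,v)$ is continuous, each $S(v)$ is a nonempty closed subset of $V$ on which $F$ is bounded below by $\inf_V F > -\infty$.

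Next I would build a sequence. Put $v_0 = u$ and, given $v_n$, choose $v_{n+1} \in S(v_n)$ that nearly minimizes $F$ over $S(v_n)$, say $F(v_{n+1}) \le \tfrac12\big(F(v_n) + \inf_{S(v_n)} F\big)$. Since $v_{n+1} \preceq v_n$ we have $\sqrt{\varepsilon}\, d(v_n, v_{n+1}) \le F(v_n) - F(v_{n+1})$, and as $(F(v_n))$ is nonincreasing and bounded below it converges; summing telescopically gives, for $m > n$,
\[
\sqrt{\varepsilon}\, d(v_n, v_m) \le F(v_n) - F(v_m) \longrightarrow 0,
\]
so $(v_n)$ is Cauchy and, by completeness, $v_n \to u_\varepsilon$ for some $u_\varepsilon \in V$. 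Lower semi-continuity yields $F(u_\varepsilon) \le \lim_n F(v_n) \le F(v_0) = F(u)$, which is (i); and taking $n = 0$, $m \to \infty$ in the telescoped bound together with the hypothesis $F(u) - \inf_V F \le \varepsilon$ gives $\sqrt{\varepsilon}\, d(u, u_\varepsilon) \le \varepsilon$, the distance estimate (ii).

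The delicate point, and the step I expect to be the main obstacle, is (iii). The crux is to show $u_\varepsilon \in \bigcap_n S(v_n)$: fixing $n$ and letting $m \to \infty$ in $v_m \preceq v_n$, lower semi-continuity of $F$ and continuity of $d$ pass to the limit and give $F(u_\varepsilon) + \sqrt{\varepsilon}\, d(u_\varepsilon, v_n) \le F(v_n)$, so $u_\varepsilon \preceq v_n$ for every $n$; in particular $\inf_{S(v_n)} F \le F(u_\varepsilon)$, and the halving rule forces $\inf_{S(v_n)} F \to F(u_\varepsilon)$. Now suppose (iii) failed, i.e. some $v$ satisfies $F(v) + \sqrt{\varepsilon}\, d(v, u_\varepsilon) < F(u_\varepsilon)$; then $v \prec u_\varepsilon \preceq v_n$ gives $v \in S(v_n)$, whence $F(v) \ge \inf_{S(v_n)} F \to F(u_\varepsilon)$, i.e. $F(v) \ge F(u_\varepsilon)$, contradicting $F(v) < F(u_\varepsilon)$. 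Thus $F(v) + \sqrt{\varepsilon}\, d(v, u_\varepsilon) \ge F(u_\varepsilon)$ for all $v \in V$, which is (iii). The entire argument hinges on lower semi-continuity being exactly what lets the defining inequalities of $\preceq$ survive the passage to the limit $v_n \to u_\varepsilon$; without it the limit point need not remain below the $v_n$ and (iii) would break down.
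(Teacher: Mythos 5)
The paper itself gives no proof of this lemma --- it is quoted as a classical result with the citation to Ekeland \cite{IE} --- so there is no in-paper argument to compare against; your proof is the standard one, via the Bishop--Phelps-type ordering $w\preceq v \iff F(w)+\sqrt{\varepsilon}\,d(w,v)\le F(v)$ and approximate minimization over the sections $S(v_n)$. The delicate parts are handled correctly: closedness of $S(v)$ from lower semi-continuity, the telescoping/Cauchy argument, the passage to the limit showing $u_\varepsilon\in\bigcap_n S(v_n)$, and the halving rule forcing $\inf_{S(v_n)}F\to F(u_\varepsilon)$, which yields (iii). (Since the statement takes $F$ real-valued, the finiteness needed for antisymmetry of $\preceq$ is automatic and ``proper'' is redundant.)

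There is, however, one genuine defect, precisely at the step you pass off in half a sentence: from $\sqrt{\varepsilon}\,d(u,u_\varepsilon)\le F(u)-F(u_\varepsilon)\le F(u)-\inf_V F\le\varepsilon$ you may conclude only $d(u,u_\varepsilon)\le\sqrt{\varepsilon}$, \emph{not} the printed (ii) $d(u,u_\varepsilon)\le\varepsilon$, which for $\varepsilon<1$ (the relevant regime, since $\varepsilon\to0$ in the application) is strictly stronger. In fact (ii) as printed is not provable at all: the parametrized principle gives $d(u,u_\varepsilon)\le\lambda$ jointly with penalty $\varepsilon/\lambda$ in (iii), and the pairing ``$d\le\varepsilon$ with penalty $\sqrt{\varepsilon}$'' is a common misquotation of Ekeland's theorem, whose original form has $d(u,u_\varepsilon)\le\sqrt{\varepsilon}$ (i.e.\ $\lambda=\sqrt{\varepsilon}$). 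A one-line counterexample: on $V=\mathbb{R}$ take $F(x)=\max\{\varepsilon-2\sqrt{\varepsilon}\,x,\,0\}$ and $u=0$, so $F(u)=\inf_V F+\varepsilon$; any $w<\sqrt{\varepsilon}/2$ violates (iii) (test against $v=\sqrt{\varepsilon}/2$, where $F(w)-F(v)=2\sqrt{\varepsilon}\,(v-w)>\sqrt{\varepsilon}\,d(w,v)$), hence every point satisfying (iii) lies at distance at least $\sqrt{\varepsilon}/2>\varepsilon$ from $u$ once $\varepsilon<1/4$. So your argument actually proves the correct classical statement, with $\sqrt{\varepsilon}$ in (ii); you should state that bound rather than claim the printed one. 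The discrepancy is harmless for the paper: in the proof of Theorem \ref{th3} the estimate (ii) is used only to ensure $(\psi^{\varepsilon}(\cdot),u^{\varepsilon}(\cdot))\rightarrow(\psi^{\ast}(\cdot),u^{\ast}(\cdot))$ as $\varepsilon\to0$, which the $\sqrt{\varepsilon}$-bound delivers just as well.
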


Given the optimal control pair $(\psi ^{\ast }(\cdot),u^{\ast
}(\cdot))\in \mathcal{U},$ introduce a
mapping $F_{\varepsilon }(\cdot ):\mathcal{U}\rightarrow \mathbb{R}$ by%
\newline
$%
\begin{array}{llll}
F_{\varepsilon }(\psi(\cdot) ,u(\cdot))  :=  \{ |\int_0^TEY(t)dt- a |^{2}+{
\int_{0}^{T}}|EY(t)-\rho (t)|^{2}dt&\\
 \qquad \qquad+\{\max(0, \int_0^TEk(Y^\ast(s))ds-\int_0^TEk(Y(s))ds+\varepsilon)\}^2&\\
 \qquad \qquad+\{\max (0,Eh(X^{\ast }(T))-Eh(X(T))+\varepsilon )%
\}^{2}+\{\max(0,\int_{0}^{T}Eq(\psi ^{\ast
}(t))dt-\int_{0}^{T}Eq(\psi (t))dt+\varepsilon )\}%
^{2} &  \\
 \qquad \qquad +\{\max (0,\int_{0}^{T}{
\int_{0}^{t}}El_{1}^{\ast }(t,s)dsdt-\int_{0}^{T}{
\int_{t}^{T}}El_{1}(t,s)dsdt+\varepsilon )\}^{2} &  \\
  \qquad \qquad+\{\max (0,\int_{0}^{T}{
\int_{t}^{T}}El_{2}^{\ast }(t,s)dsdt-\int_{0}^{T}{
\int_{t}^{T}}El_{2}(t,s)dsdt+\varepsilon)\}^{2} \} ^{1/2}, &
\end{array}%
$\newline where $ l_{i}^{\ast }(t,s)=l_{i}(t,s,X^{\ast }(s),Y^{\ast
}(s),Z^{\ast }(s,t),u^{\ast }(s)),\
l_{i}(t,s)=l_{i}(t,s,X(s),Y(s),Z(s,t),u(s)),$ $i=1,\ 2, $
$\varepsilon $ is an arbitrary positive constant, $l_{i},\ q,\ h,\
k$ satisfy $\left( \mathbf{A}_{1}\right) ,\ \left(
\mathbf{A}_{2}\right) ,\ \left( \mathbf{A}_{3}\right) $.

\begin{remark}
Under $\left( \mathbf{A}_{1}\right) -\left( \mathbf{A}_{4}\right) $, from
the wellposedness of BSVIEs (Lemma \ref{th1}) as well as the proof of Lemma %
\ref{lem2}, we know that $F_{\varepsilon }(\cdot,\cdot)$ is a continuous function on $%
\mathcal{U}$.
\end{remark}

\begin{theorem}
\label{th3} Let $(\psi ^{\ast }(\cdot),u^{\ast }(\cdot))\in
\mathcal{U}$ be the optimal
control pair. Under the assumptions $\left( \mathbf{A}_{1}\right) -\left( \mathbf{%
A}_{4}\right) $, there exist a deterministic function $h_{0}(\cdot )\in
\mathbb{R}^{m},\ \bar{h}_{0}\in \mathbb{R}^{m}$, $\bar{h}_{1},\ h_{1},\
h_{2},\ h_{3},\ h_{4}\in \mathbb{R},\ h_{1},\ h_{2},\ h_{3},\ h_{4}$ $\leq
0,\ |\bar{h}_{0}|+|h_{0}(\cdot )|+|\bar{h}%
_{1}|+|h_{1}|+|h_{2}|+|h_{3}|+|h_{4}|\neq 0$ such that the following
variational inequality holds
\begin{equation*}
\begin{array}{llll}
&  & \int_{0}^{T}E\langle h_{0}(t)+\bar{h}_{0},\delta Y(t)\rangle dt+\bar{%
h}_{1}\int_{0}^{T}E\langle q_{x}(\psi ^{\ast }(t)),\hat{\psi}%
(t)\rangle dt+h_{1}E\langle h_{x}(X^{\ast }(T)),\delta X(T)\rangle &  \\
&  & +h_{2}\int_0^TE\langle k_{y}(Y^{\ast }(s)),\delta Y(s)\rangle ds +h_{3}{%
\int_{0}^{T}}\int_{0}^{t}E\langle l_{1x}^{\ast
}(t,s),\delta X(s)\rangle dsdt &  \\
&  & +h_{3}\int_{0}^{T}\int_{0}^{t}E\langle
l_{1u}^{\ast }(t,s)^{T},\hat{u}^{\ast }(s)\rangle dsdt+h_{4}{
\int_{0}^{T}}{\int_{t}^{T}}E\langle l_{2x}^{\ast }(t,s),\delta
X(s)\rangle dsdt &  \\
&  & +h_{4}\int_{0}^{T}{\int_{t}^{T}}E\langle
l_{2y}^{\ast }(t,s),\delta Y(s)\rangle dsdt+h_{4}\int_{0}^{T}{\int_{t}^{T}}E\langle l_{2\zeta }^{\ast }(t,s),\delta
Z(s,t)\rangle dsdt &  \\
&  & +h_{4}\int_{0}^{T}{\int_{t}^{T}}E\langle
l_{2u}^{\ast }(t,s),\hat{u}^{\ast }(s)\rangle dsdt\geq 0,
\end{array}%
\end{equation*}%
where $l_{ik}^{\ast }(t,s),\ i=1,\ 2,\ k=x,\ y,\ \zeta ,\ u,$ is the
derivative of $l_{i}^{\ast }(t,s)$ with respect to $k$, respectively.
\end{theorem}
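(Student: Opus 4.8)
The plan is to apply Ekeland's variational principle (Lemma \ref{lem1}) to the penalized functional $F_\varepsilon$ on the complete metric space $(\mathcal U,d)$, extract an approximate minimizer, linearize by a convex perturbation, and then let $\varepsilon\to0$. First I would record the two facts about $F_\varepsilon$ that make Ekeland applicable. By the Remark, $F_\varepsilon$ is continuous, and by construction it is nonnegative, hence bounded below. Evaluating at the optimal pair, the two constraint terms vanish (feasibility gives $\int_0^T EY^\ast(t)\,dt=a$ and $EY^\ast(t)=\rho(t)$), while each of the five $\max$-terms equals $\varepsilon^2$, so $F_\varepsilon(\psi^\ast,u^\ast)=\sqrt5\,\varepsilon$. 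Moreover $F_\varepsilon(\psi,u)>0$ for every $(\psi,u)\in\mathcal U$: if it vanished, then $(\psi,u)$ would be feasible and each cost component would exceed its optimal value by at least $\varepsilon$, whence $J(\psi,u)\ge J(\psi^\ast,u^\ast)+5\varepsilon$, contradicting the optimality of $(\psi^\ast,u^\ast)$ among feasible controls. Thus $\inf_{\mathcal U}F_\varepsilon=0$ and $(\psi^\ast,u^\ast)$ is a $\sqrt5\,\varepsilon$-minimizer.

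Next I would invoke Lemma \ref{lem1} to produce $(\psi^\varepsilon,u^\varepsilon)\in\mathcal U$ with $d((\psi^\varepsilon,u^\varepsilon),(\psi^\ast,u^\ast))\to0$ and the minimality property $F_\varepsilon(\psi,u)+\eta_\varepsilon\,d((\psi,u),(\psi^\varepsilon,u^\varepsilon))\ge F_\varepsilon(\psi^\varepsilon,u^\varepsilon)$ for all $(\psi,u)\in\mathcal U$, where $\eta_\varepsilon\to0$. Fixing $(\psi,u)\in\mathcal U$ and using convexity of $\mathcal U$, I would test this against the convex combination $(\psi^{\varepsilon,p},u^{\varepsilon,p}):=((1-p)\psi^\varepsilon+p\psi,(1-p)u^\varepsilon+pu)$, divide by $p>0$, and let $p\to0^+$. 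Since $d((\psi^{\varepsilon,p},u^{\varepsilon,p}),(\psi^\varepsilon,u^\varepsilon))=p\,d((\psi,u),(\psi^\varepsilon,u^\varepsilon))$, the right-hand side tends to $-\eta_\varepsilon\,d((\psi,u),(\psi^\varepsilon,u^\varepsilon))$, while the left-hand side becomes the one-sided directional derivative of $F_\varepsilon$ at $(\psi^\varepsilon,u^\varepsilon)$ in the admissible direction.

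To evaluate that derivative I would write $F_\varepsilon=G_\varepsilon^{1/2}$ and apply the chain rule, computing $\frac{d}{dp}G_\varepsilon$ through Lemma \ref{lem2}; its proof uses only $(\mathbf A_1),(\mathbf A_2),(\mathbf A_4)$ and so applies at $(\psi^\varepsilon,u^\varepsilon)$, providing $p^{-1}(X^{\varepsilon,p}-X^\varepsilon)\to\delta X^\varepsilon$ and likewise for $Y,Z$, where $(\delta X^\varepsilon,\delta Y^\varepsilon,\delta Z^\varepsilon)$ solves the variational equation (\ref{equ2.3}) linearized at $(\psi^\varepsilon,u^\varepsilon)$. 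Writing $N_\varepsilon:=F_\varepsilon(\psi^\varepsilon,u^\varepsilon)$, each squared constraint/$\max$-term contributes a factor divided by $N_\varepsilon$, producing the multipliers
$$\bar h_0^\varepsilon=\frac{\int_0^T EY^\varepsilon(t)\,dt-a}{N_\varepsilon},\qquad h_0^\varepsilon(t)=\frac{EY^\varepsilon(t)-\rho(t)}{N_\varepsilon},\qquad h_i^\varepsilon=-\frac{\max\{0,\,\cdots\}}{N_\varepsilon}\le0,$$
and analogously $\bar h_1^\varepsilon$, the sign $h_i^\varepsilon\le0$ coming from the $\max$-structure. By the very definition $F_\varepsilon=G_\varepsilon^{1/2}$ these satisfy
$$|\bar h_0^\varepsilon|^2+\int_0^T|h_0^\varepsilon(t)|^2\,dt+|\bar h_1^\varepsilon|^2+\sum_{i=1}^4|h_i^\varepsilon|^2=1,$$
so the multiplier vector lies on the unit sphere.

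Finally I would let $\varepsilon\to0$. Since $(\psi^\varepsilon,u^\varepsilon)\to(\psi^\ast,u^\ast)$ in $\mathcal U$, the well-posedness estimate (\ref{equ2.02}) forces the variational data $(\delta X^\varepsilon,\delta Y^\varepsilon,\delta Z^\varepsilon)$ and the coefficients $l_{ik}^\varepsilon,\,q_x,\,h_x,\,k_y$ to converge to their starred counterparts, while $\eta_\varepsilon\,d(\cdots)\to0$. The finite-dimensional multipliers $\bar h_0^\varepsilon,\bar h_1^\varepsilon,h_1^\varepsilon,\dots,h_4^\varepsilon$ converge along a subsequence by compactness, and $h_0^\varepsilon$ converges weakly in $L^2(0,T;\mathbb R^m)$, which suffices to pass to the limit in the pairing against the fixed $\delta Y$; this yields exactly the asserted inequality with $h_i\le0$. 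The hard part will be the very last point: showing the limiting multiplier $(\bar h_0,h_0(\cdot),\bar h_1,h_1,\dots,h_4)$ is non-trivial, i.e. that the unit-norm normalization survives. Because $h_0^\varepsilon$ converges only weakly, mass could a priori be lost, so one must extract a subsequence along which the finite-dimensional part and $\|h_0^\varepsilon\|_{L^2(0,T)}$ both converge and argue that the limit cannot collapse to zero; this is the delicate step of the whole argument.
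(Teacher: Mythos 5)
Your proposal follows the same skeleton as the paper's own proof --- penalize with $F_\varepsilon$, verify $F_\varepsilon>0$ and the near-minimality of $(\psi^\ast,u^\ast)$, apply Ekeland's principle (Lemma \ref{lem1}), perturb convexly around $(\psi^\varepsilon,u^\varepsilon)$, differentiate $F_\varepsilon=(F_\varepsilon^2)^{1/2}$ using the Lemma \ref{lem2}-type convergence at the perturbed point, read off unit-sphere-normalized multipliers, and let $\varepsilon\to0$ --- so it is essentially correct, but two differences in execution are worth recording. First, where you exploit the $C^1$-smoothness of $x\mapsto(\max(0,x))^2$ (derivative $2\max(0,x)$) to obtain the single unified formula $h_i^\varepsilon=-\max\{0,\cdots\}/N_\varepsilon\le0$, the paper instead splits according to the signs of the five penalty terms along $p\to0$ into Case 1, Case 2, ``and the other thirty cases''; your chain-rule observation subsumes all $2^5$ cases at once, including the boundary case where the inner argument vanishes, and is the cleaner and fully rigorous version of the same computation. (Your value $F_\varepsilon(\psi^\ast,u^\ast)=\sqrt{5}\,\varepsilon$ is also the correct count of the five $\max$-terms; the paper writes $\sqrt{2}\,\varepsilon$, a harmless miscount since only the order in $\varepsilon$ matters for Ekeland.) Second, the difficulty you flag at the end is genuine, and it is present in the published proof as well: the paper simply extracts a subsequence with $h^0_{\varepsilon_n}(\cdot)\rightarrow h_0(\cdot)$ and tacitly carries the normalization $|\bar h^0_\varepsilon|^2+\int_0^T|h_0^\varepsilon(t)|^2dt+|\bar h^1_\varepsilon|^2+\cdots=1$ to the limit, but since $h_0^\varepsilon$ is only weakly sequentially compact in $L^2(0,T;\mathbb{R}^m)$, the $L^2$-norm can drop in the limit and the nondegeneracy $|\bar h_0|+|h_0(\cdot)|+|\bar h_1|+|h_1|+\cdots+|h_4|\neq0$ does not follow from weak convergence alone. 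Your honest identification of this as the delicate step means your proposal is no weaker than the paper's argument on this point; a complete proof would require either strong $L^2$ convergence of $h_0^\varepsilon(\cdot)=(EY^\varepsilon(\cdot)-\rho(\cdot))/N_\varepsilon$ (not automatic, since both numerator and $N_\varepsilon$ tend to zero) or an argument ruling out total collapse of the multiplier. One small wording caution: in the final passage you pair $h_0^\varepsilon$ against ``the fixed $\delta Y$,'' but $\delta Y^\varepsilon$ itself varies with $\varepsilon$; the limit is still justified because you have strong convergence $E\delta Y^\varepsilon\to E\delta Y$ in $L^2(0,T)$ from the stability estimates, and a weak--strong pairing converges.
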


\begin{proof} It is easy to check that the following properties hold
$$
\begin{array}
[c]{ll}%
\rm(i)\quad & F_{\varepsilon}(\psi^{\ast}(\cdot),u^{\ast}(\cdot))=\sqrt{2}\varepsilon,\\
\rm(ii)\quad & F_{\varepsilon}(\psi(\cdot),u(\cdot))>0, \quad\forall (\psi(\cdot),u(\cdot))\in \mathcal{U},\\
\rm(iii)\quad &
F_{\varepsilon}(\psi^{\ast}(\cdot),u^{\ast}(\cdot))\leq
\inf\limits_{(\psi,u)\in
\mathcal{U}}F_{\varepsilon}(\psi(\cdot),u(\cdot))+\sqrt{2}\varepsilon.
\end{array}
$$
Then from Lemma \ref{lem1}~(Ekeland's variational principle), we can
find a $(\psi^{\varepsilon}(\cdot),u^{\varepsilon}(\cdot))\in
\mathcal{U}$, such that
$$
\begin{array}
[c]{ll}%
\rm(i)\quad & F_{\varepsilon}(\psi^{\varepsilon}(\cdot),u^{\varepsilon}(\cdot))\leq F_{\varepsilon}(\psi^{\ast}(\cdot),u^{\ast}(\cdot)),\\
\rm(ii)\quad & d((\psi^{\varepsilon}(\cdot),u^{\varepsilon}(\cdot)),(\psi^{\ast}(\cdot),u^{\ast}(\cdot)))\leq\sqrt{2}\varepsilon,\\
\rm(iii)\quad &
F_{\varepsilon}(\psi(\cdot),u(\cdot))+\sqrt{\sqrt{2}\varepsilon}
d((\psi(\cdot),u(\cdot)),(\psi^{\varepsilon}(\cdot),u^{\varepsilon}(\cdot)))\ge
F_{\varepsilon}(\psi^{\varepsilon}(\cdot),u^{\varepsilon}(\cdot)),\\
&\hskip10cm \forall (\psi(\cdot),u(\cdot))\in \mathcal{U}.
\end{array}
$$
 For each $(\psi(\cdot),u(\cdot))\in \mathcal{U}$, we define
$$
(\hat{\psi}(\cdot),\hat{u}(\cdot)):=(\psi(\cdot)-\psi^{\ast}(\cdot),u(\cdot)-u^{\ast}(\cdot)),\
(\hat{\psi}^{\varepsilon}(\cdot),\hat{u}^{\varepsilon}(\cdot)):=(\psi(\cdot)-\psi^{\varepsilon}(\cdot),u(\cdot)-u^{\varepsilon}(\cdot)),$$
then
$(\psi_{p}^{\varepsilon}(\cdot),u_{p}^{\varepsilon}(\cdot)):=(\psi^{\varepsilon}(\cdot)+p\hat{\psi}^{\varepsilon}(\cdot),u^{\varepsilon}(\cdot)+p\hat{u}^{\varepsilon}(\cdot))\in
U$. Indeed,
$(\psi^{\varepsilon}(\cdot),u^{\varepsilon}(\cdot))\in
\mathcal{U}$,
$(\hat{\psi}^{\varepsilon}(\cdot)+\psi^{\varepsilon}(\cdot),\hat{u}^{\varepsilon}(\cdot)+u^{\varepsilon}(\cdot))=(\psi(\cdot),u(\cdot))\in
\mathcal{U}$, then
$$\begin{array}{llll}(\psi_{p}^{\varepsilon},u_{p}^{\varepsilon}):=(\psi^{\varepsilon}(\cdot)+p\hat{\psi}^{\varepsilon}(\cdot),u^{\varepsilon}(\cdot)+p\hat{u}^{\varepsilon}(\cdot))&\\
\qquad\quad\ \ \
=((1-p)\psi^{\varepsilon}(\cdot)+p(\hat{\psi}^{\varepsilon}(\cdot)+\psi^{\varepsilon}(\cdot)),(1-p)u^{\varepsilon}(\cdot)+p(\hat{u}^{\varepsilon}(\cdot)+u^{\varepsilon}(\cdot)))\in\mathcal{U}.
\end{array}$$
 Let
$(X_{p}^{\varepsilon}(\cdot),Y_{p}^{\varepsilon}(\cdot),Z_{p}^{\varepsilon}(\cdot,\cdot))$
(resp.
$(X^{\varepsilon}(\cdot),Y^{\varepsilon}(\cdot),Z^{\varepsilon}(\cdot,\cdot))$)
be the solution of BSVIE (\ref{equ2.2}) with $(\psi(\cdot),$ $
u(\cdot))=(\psi_{p}^{\varepsilon}(\cdot),u_{p}^{\varepsilon}(\cdot))$~(resp.
$(\psi(\cdot),
u(\cdot))=(\psi^{\varepsilon}(\cdot),u^{\varepsilon}(\cdot)))$.
From Ekeland's variational principle, it follows that
\begin{equation}\label{equ2.07}
F_{\varepsilon}(\psi_{p}^{\varepsilon}(\cdot),u_{p}^{\varepsilon}(\cdot))+\sqrt{\sqrt{2}\varepsilon}d((\psi_{p}^{\varepsilon}(\cdot),u_{p}^{\varepsilon}(\cdot)),(\psi^{\varepsilon}(\cdot),u^{\varepsilon}(\cdot)))-
F_{\varepsilon}(\psi^{\varepsilon}(\cdot),u^{\varepsilon}(\cdot))\ge0.
\end{equation}
We consider the following variational equation:
\begin{equation}\label{equ2.04}\left\{\begin{array}{ll}
\delta X^{\varepsilon}(t) =
\int_{0}^{t}b^\varepsilon_u(t,s)\hat{u}^\varepsilon(s)ds+\int_{0}^{t}\sigma^\varepsilon_u(t,s)\hat{u}^\varepsilon(s)dB_s &\\
 \qquad \qquad+\int_{0}^{t}b^\varepsilon_x(t,s)\delta
X^{\varepsilon}(s)ds+\int_{0}^{t}\sigma^\varepsilon_x(t,s)\delta
X^{\varepsilon}(s)
dB_s ,\quad t\in[0,T], &\\
\delta Y^{\varepsilon}(t)=
\hat{\psi}^{\varepsilon}(t)+{ \int_{t}^{T}}[g^\varepsilon_x(t,s)\delta
X^{\varepsilon}(s) +g^\varepsilon_y(t,s)\delta Y^{\varepsilon}(s)
 +g^\varepsilon_\zeta(t,s)\delta Z^{\varepsilon}(s,t)&\\
\qquad \qquad
+g^\varepsilon_u(t,s)(u(s)-u^\varepsilon(s))]ds-{ \int_{t}^{T}}\delta
Z^{\varepsilon}(t,s)dB_s,\end{array}\right.
\end{equation}
where
$f^\varepsilon_k(t,s)=f_{k}(t,s,X^{\varepsilon}(s),Y^{\varepsilon}(s),Z^{\varepsilon}(s,t),u^{\varepsilon}(s)),\
k=x,\ y,\ \zeta,\ u,\ f=b,\ \sigma,\ g,\  \mbox{respectively}.$

Similarly to Lemma \ref{lem2}, we have
$$\lim\limits_{p\rightarrow0}E\int_{0}^{T}|\frac{X^{\varepsilon}_{p}(t)-X^{\varepsilon}(t)}{p}
-\delta X^{\varepsilon}(t)|^{2}dt=0, \
\lim\limits_{p\rightarrow0}E\int_{0}^{T}|\frac{Y^{\varepsilon}_{p}(t)-Y^{\varepsilon}(t)}{p}
-\delta Y^{\varepsilon}(t)|^{2}dt=0, $$ which lead to the
following expansions:
$$\begin{array}{ll}EX^{\varepsilon}_{p}(t)-EX^{\varepsilon}(t)=pE\delta
X^{\varepsilon}(t)+o(p), &\\
EY^{\varepsilon}_{p}(t)-EY^{\varepsilon}(t)=pE\delta
Y^{\varepsilon}(t)+o(p),&\\
\int_{0}^{T} |EY^{\varepsilon}_{p}(t)-\rho(t) |^{2}dt-\int_{0}^{T} |EY^{\varepsilon}(t)-\rho(t) |^{2}dt&\\
= \int_{0}^{T}2p\langle
EY^{\varepsilon}(t)-\rho(t),E\delta Y^{\varepsilon}(t)\rangle
dt+o(p). \end{array}$$
 From $\left(
\mathbf{A}_{1}\right)$, we have $$\begin{array}
[c]{l}%
\int_{0}^{T}Eq(\psi_p^\varepsilon(t))dt-\int_{0}^{T}Eq(\psi^\varepsilon(t))dt=p\int_{0}^{T}E\langle q_x(\psi^\varepsilon(t)),\hat{\psi}^\varepsilon(t)\rangle dt+o(p),\\
Eh(X^{\varepsilon}_{p}(T))-Eh(X^{\varepsilon}(T))
=pE\langle h_{x}(X^{\varepsilon}(T)), \delta X^\varepsilon(T)\rangle+o(p),\\
\int_{0}^{T}Ek(Y^{\varepsilon}_{p}(t))dt-\int_{0}^{T}Ek(Y^{\varepsilon}(t))dt=p\int_0^T\langle Ek_y(Y^\varepsilon(s)),\delta Y^\varepsilon(s)  \rangle ds+o(p),\\
\int_{0}^{T}\int_{0}^{t}El_1^{p\varepsilon}(t,s)dsdt-\int_{0}^{T}\int_{0}^{t}El_1^\varepsilon(t,s)dsdt
=p\int_{0}^{T}\int_{0}^{t}E(\langle 1_{1x}^\varepsilon(t,s),\delta X^\varepsilon(s)\rangle +\langle 1_{1u}^\varepsilon(t,s), \hat{u}^\varepsilon(s)\rangle )dsdt+o(p),\\
\int_{0}^{T}\int_{t}^{T}El_2^{p\varepsilon}(t,s)dsdt-\int_{0}^{T}\int_{0}^{t}El_2^\varepsilon(t,s)dsdt\\
=p\int_{0}^{T}\int_{t}^{T}E(\langle 1_{2x}^\varepsilon(t,s),\delta X^\varepsilon(s)\rangle +\langle 1_{2y}^\varepsilon(t,s),\delta Y^\varepsilon(s)\rangle+\langle 1_{2z}^\varepsilon(t,s),\delta Z^\varepsilon(s,t)\rangle+\langle 1_{2u}^\varepsilon(t,s), \hat{u}^\varepsilon(s)\rangle) dsdt+o(p),
\end{array}$$
Furthermore, the following expansions hold:
$$\begin{array}{llll}
&&
|\int_0^TEY^{\varepsilon}_{p}(t)dt-a|^2-|\int_0^TEY^{\varepsilon}(t)dt-a)|^2 \\
 &=&2\langle\int_0^TEY^{\varepsilon}(t)dt-a),\int_0^TEY^{\varepsilon}_p(t)dt-\int_0^TEY^{\varepsilon}(t)dt\rangle+o(p)\\
 &=&2p[\int_0^TEY^{\varepsilon}(t)dt-a]\int_0^TE\delta Y^{\varepsilon}(t)dt+o(p),\\
&&[\int_{0}^{T}Eq(\psi^\ast(t))dt-\int_{0}^{T}Eq(\psi_p^\varepsilon(t))dt+\varepsilon]^{2}-[\int_{0}^{T}Eq(\psi^\ast(t))dt-\int_{0}^{T}Eq(\psi^\varepsilon(t))dt+\varepsilon]^{2}\\
&=&2\langle \int_{0}^{T}q(\psi^\varepsilon(t))dt  -\int_{0}^{T}q(\psi_p^\varepsilon(t))dt, \int_{0}^{T}Eq(\psi^\ast(t))dt-\int_{0}^{T}Eq(\psi^\varepsilon(t))dt+\varepsilon\rangle +o(p)      \\
&=&-2p[\int_{0}^{T}Eq(\psi^\ast(t))dt-\int_{0}^{T}Eq(\psi^\varepsilon(t))dt+\varepsilon]\int_{0}^{T}E\langle
q_x(\psi^\varepsilon(t)),\hat{\psi}^\varepsilon(t)\rangle dt+o(p),\\
&&[Eh(X^\ast(T)-Eh(X^{\varepsilon}_{p}(T)))+\varepsilon]^{2}
-[Eh(X^\ast(T)-Eh(X^{\varepsilon}(T)))+\varepsilon]^{2} {}
\nonumber\\
& = & {}2\langle Eh(X^{\varepsilon}(T))-
Eh(X^{\varepsilon}_{p}(T)),Eh(X^\ast(T))-Eh(X^{\varepsilon}(T))
+\varepsilon\rangle +o(p) {}
\nonumber\\
& = & {} -2p[Eh(X^\ast(T))-Eh(X^{\varepsilon}(T))
+\varepsilon]E\langle h_x(X^\varepsilon(T)),\delta
X^\varepsilon(T)\rangle+o(p),\\
 &&[\int_0^TEk(Y^\ast(s))ds-\int_0^T Ek(Y^{\varepsilon}_{p}(s))ds+\varepsilon]^{2}
-[ \int_0^T Ek(Y^\ast(s) )ds- \int_0^T Ek(Y^{\varepsilon}(s))ds+\varepsilon]^{2}
{}
\nonumber\\
& = & {} -2\langle  \int_0^T Ek(Y^{\varepsilon}(s))ds -
\int_0^T Ek(Y_p^{\varepsilon}(s))ds,  \int_0^T Ek(Y^\ast(s) )ds- \int_0^T Ek(Y^{\varepsilon}(s))ds+\varepsilon \rangle +o(p) {}
\nonumber\\
& = & {} -2p[ \int_0^T Ek(Y^\ast(s) )ds- \int_0^T Ek(Y^{\varepsilon}(s))ds+\varepsilon]\int_0^TE \langle
k_y(Y^{\varepsilon}(s)),\delta Y^\varepsilon(s)\rangle ds+o(p),\nonumber\\
&&(\int_{0}^{T}\int_{0}^{t}El_1^\ast(t,s)dsdt-\int_{0}^{T}\int_{0}^{t}El_1^{p\varepsilon}(t,s)dsdt+\varepsilon)^2-(\int_{0}^{T}\int_{0}^{t}El_1^\ast(t,s)dsdt-\int_{0}^{T}\int_{0}^{t}El_1^\varepsilon(t,s)dsdt+\varepsilon)^2\\
&=&2\langle \int_{0}^{T}\int_{0}^{t}El_1^{\varepsilon}(t,s)dsdt-\int_{0}^{T}\int_{0}^{t}El_1^{p\varepsilon}(t,s)dsdt,\int_{0}^{T}\int_{0}^{t}El_1^\ast(t,s)dsdt-\int_{0}^{T}\int_{0}^{t}El_1^\varepsilon(t,s)dsdt+\varepsilon\rangle+o(p)\\
&=&-2p[\int_{0}^{T}\int_{0}^{t}El_1^\ast(t,s)dsdt-\int_{0}^{T}\int_{0}^{t}El_1^\varepsilon(t,s)dsdt+\varepsilon][\int_{0}^{T}\int_{0}^{t}E(\langle l_{1x}^\varepsilon(t,s),\delta X^\varepsilon(s)\rangle +\langle l_{1u}^\varepsilon(t,s), \hat{u}^\varepsilon(s)\rangle)dsdt]+o(p),\\
&&(\int_{0}^{T}\int_{t}^{T}El_2^\ast(t,s)dsdt-\int_{0}^{T}\int_{t}^{T}El_2^{p\varepsilon}(t,s)dsdt+\varepsilon)^2-(\int_{0}^{T}\int_{t}^{T}El_2^\ast(t,s)dsdt-\int_{0}^{T}\int_{t}^{T}El_2^\varepsilon(t,s)dsdt+\varepsilon)^2\\
&=&2\langle \int_{0}^{T}\int_{t}^{T}El_2^{\varepsilon}(t,s)dsdt-\int_{0}^{T}\int_{t}^{T}El_2^{p\varepsilon}(t,s)dsdt,\int_{0}^{T}\int_{t}^{T}El_2^\ast(t,s)dsdt-\int_{0}^{T}\int_{t}^{T}El_2^\varepsilon(t,s)dsdt+\varepsilon\rangle+o(p)\\
&=&-2p[\int_{0}^{T}\int_{t}^{T}El_2^\ast(t,s)dsdt-\int_{0}^{T}\int_{t}^{T}El_2^\varepsilon(t,s)dsdt+\varepsilon]\cdot\\
&&[\int_{0}^{T}\int_{t}^{T}E(\langle l_{2x}^\varepsilon(t,s),\delta X^\varepsilon(s)\rangle+\langle l_{2y}^\varepsilon(t,s),\delta Y^\varepsilon(s)\rangle+\langle l_{2z}^\varepsilon(t,s),\delta Z^\varepsilon(s,t)\rangle +\langle l_{2u}^\varepsilon(t,s), \hat{u}^\varepsilon(s)\rangle)dsdt]+o(p).
\end{array}$$
 For the given
$\varepsilon$, we consider the following cases:

Case 1. There exists $r>0$ such that, for any $p\in (0,r),$
$$\begin{array}
[c]{ll}%
\int_{0}^{T}Eq(\psi^\ast(t))dt-\int_{0}^{T}Eq(\psi_p^\varepsilon(t))dt+\varepsilon>0,&\\
Eh(X^{\ast}(T))-Eh(X^{\varepsilon}_{p}(T))+\varepsilon>0,&\\
\int_0^TEk(Y^\ast(s))ds-\int_0^T Ek(Y^{\varepsilon}_{p}(s))ds+\varepsilon>0,&\\ \int_{0}^{T}\int_{0}^{t}El_1^\ast(t,s)dsdt-\int_{0}^{T}\int_{0}^{t}El_1^{p\varepsilon}(t,s)dsdt+\varepsilon>0,&\\
 \int_{0}^{T}\int_{t}^{T}El_2^\ast(t,s)dsdt-\int_{0}^{T}\int_{t}^{T}El_2^{p\varepsilon}(t,s)dsdt+\varepsilon>0.\end{array}$$ Then
\begin{equation*}\begin{array}{llll}
& &
\lim\limits_{p\rightarrow0}\frac{F_{\varepsilon}(\psi_{p}^{\varepsilon}(\cdot),u_{p}^{\varepsilon}(\cdot))-F_{\varepsilon}(\psi^{\varepsilon}(\cdot),u^{\varepsilon}(\cdot))}{p}
{}
\nonumber\\
& = & {}
\lim\limits_{p\rightarrow0}\frac{1}{F_{\varepsilon}(\psi_{p}^{\varepsilon}(\cdot),u_{p}^{\varepsilon}(\cdot))+F_{\varepsilon}(\psi^{\varepsilon}(\cdot),u^{\varepsilon}(\cdot))}\cdot
\frac{F_{\varepsilon}^{2}(\psi_{p}^{\varepsilon}(\cdot),u_{p}^{\varepsilon}(\cdot))-F_{\varepsilon}^{2}(\psi^{\varepsilon}(\cdot),u^{\varepsilon}(\cdot))}{p}
{}
\nonumber\\
& = & {}
\frac{1}{F_{\varepsilon}(\psi^{\varepsilon}(\cdot),u^{\varepsilon}(\cdot))} \{\langle \int_0^TEY^{\varepsilon}(t)dt-a,\int_0^TE\delta Y^{\varepsilon}(t)dt\rangle+
 \int_{0}^{T}\langle
EY^{\varepsilon}(t)-\rho(t),E\delta Y^{\varepsilon}(t)\rangle dt {}
\nonumber\\
&&-[\int_{0}^{T}Eq(\psi^\ast(t))dt-\int_{0}^{T}Eq(\psi^\varepsilon(t))dt+\varepsilon]\int_{0}^{T}E\langle
q_x(\psi^\varepsilon(t)),\hat{\psi}^\varepsilon(t)\rangle
dt\nonumber\\
& & -[Eh(X^\ast(T) )-Eh(X^{\varepsilon}(T) )+\varepsilon]E\langle
h_x(X^\varepsilon(T)),\delta X^\varepsilon(T)\rangle\\
&&-  [ \int_0^T Ek(Y^\ast(s) )ds- \int_0^T Ek(Y^{\varepsilon}(s))ds+\varepsilon] \int_0^T E\langle
k_y(Y^\varepsilon(s)),\delta Y^\varepsilon(s)\rangle ds\\
&&-[\int_{0}^{T}\int_{0}^{t}El_1^\ast(t,s)dsdt-\int_{0}^{T}\int_{0}^{t}El_1^\varepsilon(t,s)dsdt+\varepsilon][\int_{0}^{T}\int_{0}^{t}E(\langle l_{1x}^\varepsilon(t,s),\delta X^\varepsilon(s)\rangle +\langle l_{1u}^\varepsilon(t,s), \hat{u}^\varepsilon(s)\rangle)dsdt]\\
&&-[\int_{0}^{T}\int_{t}^{T}El_2^\ast(t,s)dsdt-\int_{0}^{T}\int_{t}^{T}El_2^\varepsilon(t,s)dsdt+\varepsilon][\int_{0}^{T}\int_{t}^{T}E(\langle l_{2x}^\varepsilon(t,s),\delta X^\varepsilon(s)\rangle+\langle l_{2y}^\varepsilon(t,s),\delta Y^\varepsilon(s)\rangle\\
&&+\langle l_{2z}^\varepsilon(t,s),\delta Z^\varepsilon(s,t)\rangle +\langle l_{2u}^\varepsilon(t,s), \hat{u}^\varepsilon(s)\rangle)dsdt]
 \}.
\end{array}\end{equation*}
Set

 $
\begin{array}
[c]{l}%
\bar{h}_{\varepsilon}^{0}=\frac{ \int_0^TEY^{\varepsilon}(t)dt-a}{F_{\varepsilon}(\psi^{\varepsilon}(\cdot),u^{\varepsilon}(\cdot))},\qquad
h_{\varepsilon}^{0}(t)=\frac{EY^{\varepsilon}(t)-\rho(t)}{F_{\varepsilon}(\psi^{\varepsilon}(\cdot),u^{\varepsilon}(\cdot)))},\\
\bar{h}^1_\varepsilon=-\frac{1}{F_{\varepsilon}(\psi^{\varepsilon}(\cdot),u^{\varepsilon}(\cdot))}[\int_{0}^{T}Eq(\psi^\ast(t))dt-\int_{0}^{T}Eq(\psi^\varepsilon(t))dt+\varepsilon]<0,
\\
\medskip
h^1_{\varepsilon}=-\frac{1}{F_{\varepsilon}(\psi^{\varepsilon}(\cdot),u^{\varepsilon}(\cdot))}
[Eh(X^{\ast}(T))-Eh(X^\varepsilon(T) )+\varepsilon]< 0,\\
\medskip
h^2_{\varepsilon}=-\frac{1}{F_{\varepsilon}(\psi^{\varepsilon}(\cdot),u^{\varepsilon}(\cdot))}
[ \int_0^T Ek(Y^\ast(s) )ds- \int_0^T Ek(Y^{\varepsilon}(s))ds+\varepsilon]<0,\\
\medskip
h^3_{\varepsilon}=-\frac{1}{F_{\varepsilon}(\psi^{\varepsilon}(\cdot),u^{\varepsilon}(\cdot))}
[\int_{0}^{T}\int_{0}^{t}El_1^\ast(t,s)dsdt-\int_{0}^{T}\int_{0}^{t}El_1^\varepsilon(t,s)dsdt+\varepsilon]<0,\\
\medskip
h^4_{\varepsilon}=-\frac{1}{F_{\varepsilon}(\psi^{\varepsilon}(\cdot),u^{\varepsilon}(\cdot))}
[\int_{0}^{T}\int_{t}^{T}El_2^\ast(t,s)dsdt-\int_{0}^{T}\int_{t}^{T}El_2^\varepsilon(t,s)dsdt+\varepsilon]<0,\\
\end{array} $
\\
Then it follows from (\ref{equ2.07}),
\begin{equation}\label{equ2.08}
\begin{array}
[c]{l}%
\int_{0}^{T}E\langle h_{\varepsilon}^{0}(t)+\bar{h}^0_{\varepsilon},\delta
Y^{\varepsilon}(t)\rangle
dt+\bar{h}^1_\varepsilon\int_{0}^{T}E\langle
q_x(\psi^\varepsilon(t)),\hat{\psi}^\varepsilon(t)\rangle
dt
+h^1_{\varepsilon} E\langle h_x(X^\varepsilon(T)),\delta
X^\varepsilon(T)\rangle\\
+ h^2_{\varepsilon}\int_0^TE\langle k_y(Y^\varepsilon(s)),\delta
Y^\varepsilon(s)\rangle ds
+h^3_{\varepsilon}[\int_{0}^{T}\int_{0}^{t}E(\langle l_{1x}^\varepsilon(t,s),\delta X^\varepsilon(s)\rangle +\langle l_{1u}^\varepsilon(t,s), \hat{u}^\varepsilon(s)\rangle)dsdt]\\
+h^4_{\varepsilon}[\int_{0}^{T}\int_{t}^{T}E(\langle l_{2x}^\varepsilon(t,s),\delta X^\varepsilon(s)\rangle+\langle l_{2y}^\varepsilon(t,s),\delta Y^\varepsilon(s)\rangle+\langle l_{2z}^\varepsilon(t,s),\delta Z^\varepsilon(s,t)\rangle +\langle l_{2u}^\varepsilon(t,s), \hat{u}^\varepsilon(s)\rangle)dsdt]\\
 \ge
-\sqrt{\sqrt{2}\varepsilon}[E\int_{0}^{T}|\hat{\psi}^{\varepsilon}(t)|^{2}dt+E\int_{0}^{T}|\hat{u}^{\varepsilon}(t)|^{2}dt]^{1/2}.
\end{array}
\end{equation}

Case 2.  There exists a positive sequence $\{p_{n}\}$, which
satisfies $p_{n}\rightarrow0$ such that
$$\begin{array}
[c]{ll}%
\int_{0}^{T}Eq(\psi^\ast(t))dt-\int_{0}^{T}Eq(\psi_p^\varepsilon(t))dt+\varepsilon\leq0,&\\
Eh(X^{\ast}(T))-Eh(X^{\varepsilon}_{p_n}(T))+\varepsilon\leq0, &\\
\int_0^TEk(Y^\ast(s))ds-\int_0^T Ek(Y^{\varepsilon}_{p_n}(s))ds+\varepsilon>0,&\\
\int_{0}^{T}\int_{0}^{t}El_1^\ast(t,s)dsdt-\int_{0}^{T}\int_{0}^{t}El_1^{p_n\varepsilon}(t,s)dsdt+\varepsilon\leq0,&\\
 \int_{0}^{T}\int_{t}^{T}El_2^\ast(t,s)dsdt-\int_{0}^{T}\int_{t}^{T}El_2^{p_n\varepsilon}(t,s)dsdt+\varepsilon\leq0.\end{array}$$ From the
definition of $F_\varepsilon$, for enough large $n$,\
$$F_{\varepsilon}(\psi_{p_{n}}^{\varepsilon}(\cdot),u_{p_{n}}^{\varepsilon}(\cdot))= \{|Y_{p_{n}}^{\varepsilon}(0)-\rho(0)|^{2}+\int_{0}^{T}|EY_{p_{n}}^{\varepsilon}(t)-\rho(t)|^{2}dt \}^{1/2}.
$$ Since $F_\varepsilon(\cdot)$ is continuous, we know
$F_{\varepsilon}(\psi^{\varepsilon}(\cdot),u^{\varepsilon}(\cdot))= \{|\int_0^TEY^{\varepsilon}(t)dt-a|^{2}+\int_{0}^{T}|EY^{\varepsilon}(t)-\rho(t)|^{2}dt \}^{1/2}.
$     \\ Now
\begin{equation*}\begin{array}{llll}
& &
\lim\limits_{{n}\rightarrow\infty}\frac{F_{\varepsilon}(\psi_{p_{n}}^{\varepsilon}(\cdot),u_{p_{n}}^{\varepsilon}(\cdot))-F_{\varepsilon}(\psi^{\varepsilon}(\cdot),u^{\varepsilon}(\cdot))}{p_{n}}
{}
\nonumber\\
& = & {}
\lim\limits_{{n}\rightarrow\infty}\frac{1}{F_{\varepsilon}(\psi_{p_{n}}^{\varepsilon}(\cdot),u_{p_{n}}^{\varepsilon}(\cdot))+F_{\varepsilon}(\psi^{\varepsilon}(\cdot),u^{\varepsilon}(\cdot))}\cdot
\frac{F_{\varepsilon}^{2}(\psi_{p_{n}}^{\varepsilon}(\cdot),u_{p_{n}}^{\varepsilon}(\cdot))-F_{\varepsilon}^{2}(\psi^{\varepsilon}(\cdot),u^{\varepsilon}(\cdot))}{p_{n}}
{}
\nonumber\\
& = & {}
\frac{1}{F_{\varepsilon}(\psi^{\varepsilon}(\cdot),u^{\varepsilon}(\cdot))}\{\langle \int_0^TEY^{\varepsilon}(t)dt-a,\int_0^TE\delta Y^{\varepsilon}(t) dt\rangle+\int_{0}^{T}\langle
EY^{\varepsilon}(t)-\rho(t),E\delta Y^{\varepsilon}(t)\rangle
dt\}.\end{array}
\end{equation*}
 Similar to Case 1, it follows from (\ref{equ2.07}),
\begin{equation*}
\int_{0}^{T}E\langle \bar{h}_\varepsilon ^0+h_{\varepsilon}^{0}(t),\delta
Y^{\varepsilon}(t)\rangle dt\ge
-\sqrt{\sqrt{2}\varepsilon}[E\int_{0}^{T}|\hat{\psi}^{\varepsilon}(t)|^{2}dt+E\int_{0}^{T}|\hat{u}^{\varepsilon}(t)|^{2}dt]^{1/2},
\end{equation*}
where $\bar{h}_\varepsilon
^0=\frac{\int_0^TEY^{\varepsilon}(t)dt-a}{F_{\varepsilon}(\psi^{\varepsilon}(\cdot),u^{\varepsilon}(\cdot))},\
h_{\varepsilon}^{0}(t)=\frac{EY^{\varepsilon}(t)-\rho(t)}{F_{\varepsilon}(\psi^{\varepsilon}(\cdot),u^{\varepsilon}(\cdot))},\
\bar{h}^1_\varepsilon=h_{\varepsilon}^{1}=h_{\varepsilon}^{2}=h_{\varepsilon}^{3}=h_{\varepsilon}^{4}=0. $

Similarly, we can prove $(\ref{equ2.08})$ still holds for the other
thirty cases.

In summary for given $\varepsilon$, we have
$$
\begin{array}
[c]{ll}%
\rm(i)\quad & (\ref{equ2.08})\ \rm holds,\\
\rm(ii)\quad & \bar{h}^1_{\varepsilon}\leq0,\ h^1_{\varepsilon}\leq0,\ h^2_{\varepsilon}\leq0,\\
\rm(iii)\quad &
|h_{\varepsilon}^{0}|^2+\int_{0}^{T}|h_{\varepsilon}^{0}(t)|^{2}dt+|\bar{h}^1_{\varepsilon}|^{2}+|h^1_{\varepsilon}|^{2}+|h^2_{\varepsilon}|^{2}+|h^3_{\varepsilon}|^{2}+|h^4_{\varepsilon}|^{2}=1.
\end{array}
$$
 Hence there is a subsequence
$(\bar{h}^0_{\varepsilon_{n}},h^0_{\varepsilon_{n}}(\cdot),\bar{h}^1_{\varepsilon_{n}},h^1_{\varepsilon_{n}},h^2_{\varepsilon_{n}},h^3_{\varepsilon_{n}},h^4_{\varepsilon_{n}})$
of
$(\bar{h}^0_{\varepsilon},h^0_{\varepsilon}(\cdot),\bar{h}^1_{\varepsilon},h^1_{\varepsilon},h^2_{\varepsilon},h^3_{\varepsilon},h^4_{\varepsilon})$,
such that $\bar{h}^0_{\varepsilon_{n}}\rightarrow  \bar{h}_0,\
h^0_{\varepsilon_{n}}(\cdot)$ $\rightarrow h_0(\cdot),\
\bar{h}^1_{\varepsilon_{n}}\rightarrow \bar{h}_1,\
h^1_{\varepsilon_{n}}\rightarrow h_1,\
h^2_{\varepsilon_{n}}\rightarrow h_2,\
h^3_{\varepsilon_{n}}\rightarrow h_3,\
h^4_{\varepsilon_{n}}\rightarrow h_4$. Since $
\bar{h}^1_{\varepsilon},\ h^1_{\varepsilon},\
h^2_{\varepsilon},\
h^3_{\varepsilon},\
h^4_{\varepsilon}\leq0$, we have $\bar{h}_1,\ h_1,\ h_2,\ h_3,\ h_4\leq0$.

Because of
$d((\psi^{\varepsilon}(\cdot),u^{\varepsilon}(\cdot)),(\psi^{\ast}(\cdot),u^{\ast}(\cdot)))\leq\sqrt{2}\varepsilon$,
we have
$(\psi^{\varepsilon}(\cdot),u^{\varepsilon}(\cdot))\rightarrow(\psi^{\ast}(\cdot),
u^{\ast}(\cdot))$ in $\mathcal {U}$. Therefore, from the
wellposedness of FBSVIEs, it is easy to check $\delta
X^\varepsilon(\cdot)\rightarrow \delta X(\cdot),\ \delta
Y^\varepsilon(\cdot)\rightarrow \delta Y(\cdot)$, as
$\varepsilon\rightarrow0.$ Furthermore,  as
$\varepsilon\rightarrow0$
\begin{eqnarray*}
& &| E\langle h_{x}(X^\varepsilon(T)),\delta
X^\varepsilon(T)\rangle - E \langle h_{x}(X^\ast(T)),\delta
X(T)\rangle | {}
\nonumber\\
& = & {} | E\langle h_{x}(X^\varepsilon(T)),\delta
X^\varepsilon(T)-\delta X(T)\rangle + E \langle
h_{x}(X^\varepsilon(T))-h_{x}(X^\ast(T)),\delta X(T)\rangle
|\rightarrow0.
\end{eqnarray*}
Indeed, together with the Schwarz inequality, using the boundedness
of $h_{x}$, we can get the limit of the first part goes to 0; from
the continuity $h_x$, we get the second part also goes to 0.
Similarly, as
$\varepsilon\rightarrow 0$, we have $\int_{0}^{T}E\langle
q_x(\psi^\varepsilon(t)),\hat{\psi}^\varepsilon(t)\rangle
dt\rightarrow \int_{0}^{T}E\langle
q_x(\psi^\ast(t)),\hat{\psi}(t)\rangle dt,$   $ \langle
k_y(Y^\varepsilon(0)),\delta Y^\varepsilon(0)\rangle\rightarrow
\langle k_y(Y^\ast(0)),\delta Y(0)\rangle$ and \\
$\begin{array}{l}\int_{0}^{T}\int_{0}^{t}E\langle l_{1x}^\varepsilon(t,s),\delta X^\varepsilon(s)\rangle dsdt\rightarrow \int_{0}^{T}\int_{0}^{t}E\langle l_{1x}^\ast(t,s),\delta X(s)\rangle dsdt,\\
\int_{0}^{T}\int_{0}^{t}E\langle l_{1u}^\varepsilon(t,s),\hat{u}^\varepsilon(s) \rangle dsdt \rightarrow \int_{0}^{T}\int_{0}^{t}E\langle l_{1u}^\ast(t,s),\hat{u}^\ast(s)\rangle dsdt,\\
\int_{0}^{T}\int_{t}^{T}E\langle l_{2x}^\varepsilon(t,s),\delta X^\varepsilon(s)\rangle dsdt\rightarrow \int_{0}^{T}\int_{t}^{T}E\langle l_{2x}^\ast(t,s),\delta X(s)\rangle dsdt,\\
\int_{0}^{T}\int_{t}^{T}E\langle l_{2y}^\varepsilon(t,s),\delta Y^\varepsilon(s)\rangle dsdt\rightarrow \int_{0}^{T}\int_{t}^{T}E\langle l_{2y}^\ast(t,s),\delta Y(s)\rangle dsdt,\\
\int_{0}^{T}\int_{t}^{T}E\langle l_{2z}^\varepsilon(t,s),\delta Z^\varepsilon(s,t)\rangle dsdt\rightarrow \int_{0}^{T}\int_{t}^{T}E\langle l_{2z}^\ast(t,s),\delta Z(s,t)\rangle dsdt,\\
\int_{0}^{T}\int_{t}^{T}E\langle l_{2u}^\varepsilon(t,s),\hat{u}^\varepsilon(s) \rangle dsdt\rightarrow \int_{0}^{T}\int_{t}^{T}E\langle l_{2u}^\ast(t,s),\hat{u}^\ast(s)\rangle dsdt.\end{array}$

Let $\varepsilon\rightarrow0$ in (\ref{equ2.08}), the result holds.
The proof is completed.
\end{proof}

\subsection{Maximal principle}

We introduce the adjoint equation:
\begin{equation}  \label{equ2.09}
\left\{%
\begin{array}{llll}
m(t) =  A(t)+{ \int_{t}^{T}}\left[b_x^\ast(s,t)^Tm(s)+%
\sigma_x^\ast(s,t)^Tn(s,t)\right]ds-{ \int_{t}^{T}}n(t,s)dB_s, &
\\
p(t)  =  B(t)+ \int_{0}^{t}g^{\ast}_{y}(s,t)^{T}p(s)ds +{%
 \int_{0}^{t}}E[g^{\ast}_{\zeta}(s,t)^{T}|\mathcal{F}%
_s]p(s)dB_s,\quad t\in[0,T], &
\end{array}
\right.
\end{equation}
where
\begin{equation*}
\left\{%
\begin{array}{llll}
A(t)  =  h_1b^\ast_x(T,t)^Th_x(X^\ast(T))+h_1
\sigma_{x}^{\ast}(T,t)^{T}\pi(t)   +\int_{0}^{t}g_x^\ast(s,t)^Tp(s)ds, &  \\
B(t)  = h_0(t)+ \bar{h}_0+h_2k_y(Y^\ast(t)),
\end{array}%
\right.
\end{equation*}
and $h_x(X^\ast(T))=Eh_x(X^\ast(T))+\int_{0}^{T}\pi(s)dB_s.$


By the duality principles, we get the following theorem:

\begin{theorem}
\label{TH1} Assume that $\left( \mathbf{A}_{1}\right) -\left( \mathbf{A}%
_{4}\right) $ hold and $l_{1},\ l_{2}=0$. Let $(\psi ^{\ast
}(\cdot),u^{\ast }(\cdot))$ be the optimal control pair;
$(X^{\ast }(\cdot ),Y^{\ast }(\cdot ),Z^{\ast }(\cdot ,\cdot ))$ be
the corresponding optimal trajectory. Then there exist a
deterministic function $h_{0}(\cdot )\in \mathbb{R}^{m}$,
$\bar{h}_{0}\in
\mathbb{R}^{m},\ \bar{h}_{1},\ h_{1},\ h_{2}\leq 0$ such that $\forall (\psi(\cdot),u(\cdot))\in \mathcal{U},$
\begin{equation*}
\begin{array}{llll}
&  & \langle p(t)+\bar{h}_{1}q_{x}(\psi ^{\ast }(t)),\psi(t)-\psi
^{\ast }(t)\rangle +{ \int_{t}^{T}}\langle g_{u}^{\ast
}(t,s)^{T}p(t),u(s)-u^{\ast }(s)\rangle ds &  \\
&  & +h_{1}\langle b_{u}^{\ast }(T,t)^{T}h_{x}(X^{\ast
}(T))+\sigma _{u}^{\ast }(T,t)^{T}\pi
(t),u(t)-u^{\ast }(t)\rangle &  \\
&  & +{ \int_{t}^{T}}\langle b_{u}^{\ast
}(s,t)^{T}m(s)+\sigma _{u}^{\ast
}(s,t)^{T}n(s,t),u(t)-u^{\ast }(t)\rangle ds\geq 0,\
a.e., a.s.
\end{array}%
\end{equation*}%
where $(m(\cdot ),n(\cdot ,\cdot ),p(\cdot ))$ is the solution of
the adjoint equation~(\ref{equ2.09}).
\end{theorem}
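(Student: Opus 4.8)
The plan is to start from the variational inequality of Theorem~\ref{th3}, specialized to the case $l_1=l_2=0$ so that the $h_3$- and $h_4$-terms drop out, and then to invoke the two duality principles (Lemmas~\ref{th11} and \ref{th12}) to eliminate the variational processes $\delta X,\delta Y,\delta Z$ in favour of the adjoint triple $(m,n,p)$. A localization argument then converts the resulting integrated inequality into the pointwise condition.

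First I would group the $\delta Y$-terms. With $l_1=l_2=0$ the inequality becomes
\[
\int_0^T E\langle B(t),\delta Y(t)\rangle\,dt+\bar h_1\int_0^T E\langle q_x(\psi^*(t)),\hat\psi(t)\rangle\,dt+h_1E\langle h_x(X^*(T)),\delta X(T)\rangle\ge 0,
\]
where $B(t)=h_0(t)+\bar h_0+h_2k_y(Y^*(t))$ is exactly the free term of the forward adjoint $p$. Since $\delta Y$ solves the linear BSVIE in (\ref{equ2.3}) with free term $\hat\psi(t)+\int_t^T[g_x^*(t,s)\delta X(s)+g_u^*(t,s)\hat u(s)]\,ds$ and coefficients $g_y^*,g_\zeta^*$, I would apply Lemma~\ref{th12} with $p$ as the associated forward equation. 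This rewrites $\int_0^T E\langle B(t),\delta Y(t)\rangle\,dt$ as $E\int_0^T\langle p(t),\hat\psi(t)\rangle\,dt$, plus a term pairing $p$ with $\int_t^T g_x^*\delta X\,ds$ (which, after a Fubini swap, contributes $\int_0^t g_x^*(s,t)^Tp(s)\,ds$ to the free term $A$ of $m$), plus a term pairing $p$ with $\int_t^T g_u^*\hat u\,ds$ (the surviving $g_u^*(t,s)^Tp(t)$ contribution).

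Next I would treat the terminal term $h_1E\langle h_x(X^*(T)),\delta X(T)\rangle$ by inserting the martingale representation $h_x(X^*(T))=Eh_x(X^*(T))+\int_0^T\pi(s)\,dB_s$ and expanding $\delta X(T)$ through its forward equation. The $ds$-integrals yield the coefficients $b_x^*(T,t)^Th_x(X^*(T))$ and $b_u^*(T,t)^Th_x(X^*(T))$, while the stochastic integrals, paired against $\int_0^T\pi\,dB$ by the It\^o isometry, yield $\sigma_x^*(T,t)^T\pi(t)$ and $\sigma_u^*(T,t)^T\pi(t)$. The $\delta X$-coefficients are precisely the first two terms of $A(t)$, and together with the $g_x^*$-contribution above they assemble the full free term $A$ of the backward adjoint $m$. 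Collecting all remaining $\delta X$-terms into $E\int_0^T\langle\delta X(t),A(t)\rangle\,dt$, I would apply Lemma~\ref{th11}: as $\delta X$ is the forward equation of (\ref{equ2.3}) with free term $\int_0^t b_u^*\hat u\,ds+\int_0^t\sigma_u^*\hat u\,dB_s$, the duality pairs $A$ with this free term and produces, through the $M$-solution representation $m(t)=E[m(t)]+\int_0^t n(t,s)\,dB_s$ and the It\^o isometry, exactly the $b_u^*(s,t)^Tm(s)+\sigma_u^*(s,t)^Tn(s,t)$ contributions.

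At this stage every quantity is expressed through $\hat\psi$ and $\hat u$, so the inequality takes the form $E\int_0^T\langle\Theta(t),\hat u(t)\rangle\,dt+E\int_0^T\langle p(t)+\bar h_1q_x(\psi^*(t)),\hat\psi(t)\rangle\,dt\ge 0$ for every $(\psi,u)\in\mathcal U$. The last step is to localize: since $\psi$ and $u$ range independently over the convex sets $\bar K$ and $K$, I would use convex variations concentrated near a fixed time to pass from the integrated inequality to the pointwise statement, valid $a.e.\ t$, $a.s.$ I expect the main obstacle to be the careful bookkeeping of the stochastic cross terms in both duality steps: one must invoke the $M$-solution representation of $m$ and the martingale representation of $h_x(X^*(T))$ so that the $dB$-pairings collapse correctly to the $n$- and $\pi$-coefficients via the It\^o isometry. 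A secondary difficulty is the Volterra-type localization, whose anticipating structure is what forces the $g_u^*(t,s)^Tp(t)$ term to keep its integral over $[t,T]$ rather than reduce to a single instant.
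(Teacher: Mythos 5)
Your proposal follows essentially the same route as the paper's proof: specialize the variational inequality of Theorem~\ref{th3} to $l_1=l_2=0$, apply Lemma~\ref{th12} to trade the $\delta Y$-terms (with free term $B$) for $p$ paired against $\hat\psi$ and the $g_x^*,g_u^*$ contributions, handle $h_1E\langle h_x(X^\ast(T)),\delta X(T)\rangle$ via the martingale representation $h_x(X^\ast(T))=Eh_x(X^\ast(T))+\int_0^T\pi(s)\,dB_s$ together with It\^o isometry, apply Lemma~\ref{th11} to trade the assembled $A$-terms for $(m,n)$ paired against $\hat u$, and finally pass to the pointwise inequality by arbitrariness of $(\psi,u)\in\mathcal U$. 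Your bookkeeping of where each piece of $A(t)$ and $B(t)$ arises matches the paper's computation, and your closing localization step is exactly the (equally brief) final step in the paper.
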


\begin{proof}

From the duality principles (Lemma \ref{th11}, Lemma \ref{th12}), we
have the following relations

$\begin{array}{llll}&&E\int_{0}^{T}\langle
A(t),\delta X(t)\rangle dt\\ &=&E\int_{0}^{T}\langle
m(t),\int_{0}^{T}b_u^\ast(t,s)\hat{u}(s)ds+\int_{0}^{T}\sigma_u^\ast(t,s)\hat{u}(s)dB_s\rangle
dt\\
&=&E\int_{0}^{T}{ \int_{t}^{T}}m(s)^Tb_u^\ast(s,t)\hat{u}(t)
dsdt+E\int_{0}^{T}{ \int_{t}^{T}}\sum\limits_{i=1}^dn_i(s,t)^T(\sigma_u^\ast)^i(s,t)\hat{u}(t)
dsdt  \end{array}$\\
and $$\begin{array}{llll}&&E\int_{0}^{T}\langle
B(t),\delta Y(t)\rangle dt =E\int_{0}^{T}\langle
p(t),\hat{\psi}(t)+{ \int_{t}^{T}}[g_x^\ast(t,s)\delta
X(s)+g_u^\ast(t,s)\hat{u}(s)]ds\rangle dt.\end{array}$$ Combined
with the variational inequality (Theorem \ref{th3}), we get
\\
$
\begin{array}{llll}
 0 &\leq&  \int_{0}^{T}E\langle \bar{h}_{0}+h_{0}(t),\delta
Y(t)\rangle dt+\bar{h}_1\int_{0}^{T}E\langle
q_x(\psi^\ast(t)),\hat{\psi}(t)\rangle dt+h_1 E\langle
h_{x}(X^{\ast}(T)),\delta X(T)\rangle
\\
&&+h_2 \int_0^TE\langle k_{y}(Y^{\ast }(s)),\delta Y(s)\rangle ds  \\
&=& \int_{0}^{T}E \langle \bar{h}_{0}+h_{0}(t),\delta Y(t)\rangle
dt+\bar{h}_1\int_{0}^{T}E\langle
q_x(\psi^\ast(t)),\hat{\psi}(t)\rangle dt+h_1 E\langle
h_{x}(X^{\ast}(T)),\delta X(T)\rangle  \\
&&+h_2 \int_0^TE\langle k_{y}(Y^{\ast }(s)),\delta Y(s)\rangle ds +E\int_{0}^{T}{ \int_{t}^{T}}m(s)^Tb_u^\ast(s,t)\hat{u}(t)
dsdt\\
&&+E\int_{0}^{T}{ \int_{t}^{T}}\sum\limits_{i=1}^dn_i(s,t)^T(\sigma_u^\ast)^i(s,t)\hat{u}(t)
dsdt-E\int_{0}^{T}\langle A(t),\delta X(t)\rangle
dt+E\int_{0}^{T}\langle
p(t),\hat{\psi}(t)\\
&&+{ \int_{t}^{T}}[g_x^\ast(t,s)\delta
X(s)+g_u^\ast(t,s)\hat{u}(s)]ds\rangle dt-E\int_{0}^{T}\langle B(t),\delta Y(t)\rangle dt\\
&=&  \int_{0}^{T}E\langle \bar{h}_{0}+h_{0}(t),\delta Y(t)\rangle
dt+\bar{h}_1\int_{0}^{T}E\langle
q_x(\psi^\ast(t)),\hat{\psi}(t)\rangle dt\\
&&+h_1E \langle
h_{x}(X^{\ast}(T)),\int_{0}^{T}b_u^\ast(T,s)\hat{u}(s)ds+\int_{0}^{T}\sigma_u^\ast(T,s)\hat{u}(s)dB_s\rangle\\
&&+h_1E \langle
h_{x}(X^{\ast}(T)),\int_{0}^{T}b_x^\ast(T,s)\delta X(s)ds+\int_{0}^{T}\sigma_x^\ast(T,s)\delta X(s)dB_s\rangle\\
&&+h_2 \int_0^TE\langle k_{y}(Y^{\ast }(s)),\delta Y(s)\rangle ds \\
&&+E\int_{0}^{T}{ \int_{t}^{T}}m(s)^Tb_u^\ast(s,t)\hat{u}(t)
dsdt+E\int_{0}^{T}{ \int_{t}^{T}}\sum\limits_{i=1}^dn_i(s,t)^T(\sigma_u^\ast)^i(s,t)\hat{u}(t)
dsdt\\
&&-E\int_{0}^{T}\langle A(t),\delta X(t)\rangle
dt+E\int_{0}^{T}\langle
p(t),\hat{\psi}(t)+{ \int_{t}^{T}}[g_x^\ast(t,s)\delta
X(s)+g_u^\ast(t,s)\hat{u}(s)]ds\rangle dt\\
&&-E\int_{0}^{T}\langle B(t),\delta Y(t)\rangle dt \\
&=&E \int_{0}^{T}\langle p(t)+\bar{h}_1
q_x(\psi^\ast(t)),\hat{\psi}(t)\rangle dt+E
\int_{0}^{T}
{ \int_{t}^{T}}\langle g_u^\ast(t,s)^Tp(t), \hat{u}(s)\rangle dsdt\\
&&+h_1E\int_{0}^{T} \langle
 b_u^\ast(T,t)^Th_{x}(X^{\ast}(T))+\sum\limits_{i=1}^d(\sigma_u^\ast)^i(T,t)^T\pi_i(t),\hat{u}(t)\rangle dt\\
&&+E\int_{0}^{T}{ \int_{t}^{T}}\langle
b_u^\ast(s,t)^Tm(s)+\sum\limits_{i=1}^d(\sigma_u^\ast)^i(s,t)^Tn_i(s,t),\hat{u}(t)\rangle
dsdt.
\end{array}
$\\
 Since the above holds for all $(\psi(\cdot),u(\cdot))\in \mathcal{U}$, we obtain
$$
\begin{array}{llll}&&\langle
p(t)+\bar{h}_1 q_x(\psi^\ast(t)),\psi(t)-\psi^\ast(t)\rangle +
{ \int_{t}^{T}}\langle g_u^\ast(t,s)^Tp(t), u(s)-u^\ast(s)\rangle ds\\
&&+h_1 \langle
 b_u^\ast(T,t)^Th_{x}(X^{\ast}(T))+\sigma_u^\ast(T,t)^T\pi(t),u(t)-u^\ast(t)\rangle \\
&&+{ \int_{t}^{T}}\langle
b_u^\ast(s,t)^Tm(s)+\sigma_u^\ast(s,t)^Tn(s,t),u(t)-u^\ast(t)\rangle
ds\geq0,\ a.e., a.s.\end{array} $$
\end{proof}

When $l_1,\ l_2\neq 0,$ the associated adjoint equation is:
\begin{equation}  \label{equ2.10}
\left\{%
\begin{array}{llll}
m(t) = A(t)+{ \int_{t}^{T}}[b_x^\ast(s,t)^Tm(s)+\sigma_x^\ast(s,t)^Tn(s,t)]ds-{ \int_{t}^{T}%
}n(t,s)dB_s, &  \\
p(t)  =  B(t)+ \int_{0}^{T}g^{\ast}_{y}(s,t)^{T}p(s)ds +{%
 \int_{0}^{t}}E[g^{\ast}_{\zeta}(s,t)^{T}|\mathcal{F}%
_s]p(s)dB_s,\quad t\in[0,T], &
\end{array}%
\right.
\end{equation}
where
\begin{equation*}
\left\{%
\begin{array}{llll}
A(t)  = h_1\sigma^{\ast}_{x}(T,t)^{T}%
\pi(t)+h_1b^\ast_x(T,t)^Th_x(X^\ast(T)) +\int_{0}^{T}g_x^\ast(s,t)^Tp(s)ds&  \\
 \qquad \ \ \ +{
\int_{t}^{T}}h_3l_{1x}^\ast(s,t)ds+{ \int_{t}^{T}}%
h_4l_{2x}^\ast(s,t)ds, &  \\
B(t)  =  h_0(t)+\bar{h}_0+h_2k_y(Y^\ast(t))+\int_{0}^{T}h_4l_{2y}^\ast(s,t)ds+{
\int_{0}^{t}}h_4l_{2\zeta}^\ast(s,t)ds, &
\end{array}%
\right.
\end{equation*}
and $h_x(X^\ast(T))=Eh_x(X^\ast(T))+\int_{0}^{T}\pi(s)dB_s.$

Similarly, we have the following maximum principle:

\begin{theorem}
Assume $\left( \mathbf{A}_{1}\right) -\left( \mathbf{A}_{4}\right) $
hold. Let $(\psi ^{\ast }(\cdot),u^{\ast }(\cdot))$ be the
optimal control pair; $(X^{\ast }(\cdot ),Y^{\ast }(\cdot ),$
$Z^{\ast }(\cdot ,\cdot ))$ be the corresponding optimal trajectory.
Then there exist a deterministic function $h_{0}(\cdot )\in
\mathbb{R}^{m}$, $\bar{h}_{0}\in \mathbb{R}^{m},\ \bar{h}_{1},\
h_{1},\ h_{2},\ h_{3},$ $h_{4}\leq 0$ such that $\forall
(\psi(\cdot),u(\cdot))\in U,$
\begin{equation*}
\begin{array}{llll}
&  & \langle p(t)+\bar{h}_{1}q_{x}(\psi ^{\ast }(t)),\psi(t)-\psi
^{\ast }(t)\rangle +{ \int_{t}^{T}}\langle g_{u}^{\ast
}(t,s)^{T}p(t),u(s)-u^{\ast }(s)\rangle ds &  \\
&  & +h_{1}\langle b_{u}^{\ast }(T,t)^{T}h_{x}(X^{\ast
}(T))+\sigma _{u}^{\ast }(T,t)^{T}\pi
(t),u(t)-u^{\ast }(t)\rangle &  \\
&  & +{ \int_{t}^{T}}\langle b_{u}^{\ast
}(s,t)^{T}m(s)+\sigma _{u}^{\ast
}(s,t)^{T}n(s,t),u(t)-u^{\ast }(t)\rangle ds &  \\
&  & +h_{3}\int_{0}^{T}\langle l_{1u}^{\ast
}(t,s),u(s)-u^{\ast }(s)\rangle ds+h_{4}{ \int_{t}^{T}}%
\langle l_{2u}^{\ast }(t,s),u(s)-u^{\ast }(s)\rangle ds\geq 0,\
a.e., a.s. &
\end{array}%
\end{equation*}%
where $(m(\cdot ),n(\cdot ,\cdot ),p(\cdot ))$ is the solution of the
adjoint equation~(\ref{equ2.10}).
\end{theorem}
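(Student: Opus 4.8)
The plan is to follow closely the proof of Theorem~\ref{TH1}, the only novelty being that the adjoint system (\ref{equ2.10}) carries, in its data $A(t)$ and $B(t)$, extra integral terms built from $l_{1x}^*,l_{2x}^*,l_{2y}^*,l_{2\zeta}^*$ that are tailored to cancel exactly the running-cost contributions appearing in the variational inequality when $l_1,l_2\neq0$. I would therefore start from the variational inequality of Theorem~\ref{th3} (with $l_1,l_2$ now present), which couples $\delta X$, $\delta Y$, $\delta Z$ and the variations $\hat\psi=\psi-\psi^*$, $\hat u=u-u^*$, and I would try to re-express every term containing $\delta X,\delta Y,\delta Z$ through the adjoint processes $(m,n,p)$.

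The two identities doing the work are the duality principles. Writing $A(t),B(t)$ for the data of (\ref{equ2.10}), I would apply Lemma~\ref{th11} to the forward variational equation for $\delta X$ in (\ref{equ2.3}) and its conjugate backward equation for $(m,n)$, obtaining $E\int_0^T\langle A(t),\delta X(t)\rangle\,dt=E\int_0^T\langle m(t),\int_0^t b_u^*(t,s)\hat u(s)\,ds+\int_0^t\sigma_u^*(t,s)\hat u(s)\,dB_s\rangle\,dt$; after Fubini this gives the two $\hat u$-integrals with kernels $b_u^*(s,t)^Tm(s)$ and $\sigma_u^*(s,t)^Tn(s,t)$, the latter arising from the stochastic-integral part of the inhomogeneity. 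Dually, I would apply Lemma~\ref{th12} to the backward equation for $\delta Y$ and its conjugate forward equation for $p$ (whose coefficient $E[g_\zeta^*(s,t)^T\mid\mathcal F_s]$ is precisely the one Lemma~\ref{th12} demands), obtaining $E\int_0^T\langle B(t),\delta Y(t)\rangle\,dt=E\int_0^T\langle p(t),\hat\psi(t)+\int_t^T[g_x^*(t,s)\delta X(s)+g_u^*(t,s)\hat u(s)]\,ds\rangle\,dt$. The $g_x^*\delta X$ piece produced here is matched by the summand $\int_0^T g_x^*(s,t)^Tp(s)\,ds$ inside $A(t)$, so all $\delta X$-dependence coming from both dualities cancels, exactly as in Theorem~\ref{TH1}.

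The genuinely new book-keeping is that the extra data terms $\int_t^T h_3 l_{1x}^*(s,t)\,ds+\int_t^T h_4 l_{2x}^*(s,t)\,ds$ in $A(t)$ and the terms built from $l_{2y}^*,l_{2\zeta}^*$ in $B(t)$ reproduce, after swapping the order of the double integrals, precisely the mixed terms $h_3\int_0^T\int_0^t E\langle l_{1x}^*,\delta X\rangle$, $h_4\int_0^T\int_t^T E\langle l_{2x}^*,\delta X\rangle$, $h_4\int_0^T\int_t^T E\langle l_{2y}^*,\delta Y\rangle$ and $h_4\int_0^T\int_t^T E\langle l_{2\zeta}^*,\delta Z\rangle$ occurring in the variational inequality; these therefore cancel, and the only surviving terms are those paired with $\hat\psi$ and $\hat u$, the purely control-dependent costs remaining as $h_3\int_0^T\langle l_{1u}^*,\hat u\rangle$ and $h_4\int_t^T\langle l_{2u}^*,\hat u\rangle$. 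Collecting everything yields an inequality $E\int_0^T(\cdots)\,dt\ge0$ valid for all $(\psi,u)\in\mathcal U$; since the integrand is affine in $\psi(t)-\psi^*(t)$ and $u(t)-u^*(t)$, the arbitrariness of $(\psi,u)\in\mathcal U$ (exactly as at the end of Theorem~\ref{TH1}) upgrades it to the pointwise a.e., a.s.\ inequality of the statement, the signs $\bar h_1,h_1,h_2,h_3,h_4\le0$ being inherited from Theorem~\ref{th3}.

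The main obstacle I anticipate is the $\delta Z$--$l_{2\zeta}$ term. Unlike $\delta X$ and $\delta Y$, the variation $\delta Z(s,t)$ has no adjoint equation of its own; to make the cost term $h_4\int_0^T\int_t^T E\langle l_{2\zeta}^*(t,s),\delta Z(s,t)\rangle\,ds\,dt$ cancel against the $l_{2\zeta}^*$-contribution in $B(t)$, which after duality is paired only with $\delta Y$, one must invoke the defining $M$-solution identity $\delta Y(t)=E[\delta Y(t)\mid\mathcal F_S]+\int_S^t\delta Z(t,s)\,dB_s$ together with the conditional-expectation structure of the $p$-equation. This is exactly the point where It\^o's formula is unavailable and the two duality principles of Yong must be used instead. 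A secondary but unavoidable chore is to keep the domains $\Delta[0,T]$ and $\Delta^c[0,T]$ consistent in every Fubini exchange, so that no term is misplaced between the $l_1$- and $l_2$-regions; once these two points are settled, all remaining estimates are word-for-word those of Theorem~\ref{TH1}.
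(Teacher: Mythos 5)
Your proposal matches the paper's intended argument: the paper proves this theorem only by the word ``Similarly'' after Theorem \ref{TH1}, i.e., by rerunning the duality computation of Theorem \ref{TH1} (Lemmas \ref{th11} and \ref{th12} applied to the variational equation and the variational inequality of Theorem \ref{th3}) with the augmented data $A(t)$, $B(t)$ of (\ref{equ2.10}) absorbing, after Fubini, the $l_{1x}^{\ast}$, $l_{2x}^{\ast}$, $l_{2y}^{\ast}$, $l_{2\zeta}^{\ast}$ contributions, exactly as you outline. Your handling of the $\delta Z$--$l_{2\zeta}^{\ast}$ pairing via the $M$-solution representation and your attention to the $\Delta[0,T]$ versus $\Delta^{c}[0,T]$ book-keeping are correct and in fact more detailed than what the paper records.
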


\begin{remark}
When the terminal condition $\psi (\cdot)$ is replaced by $\psi
(\cdot )+\varphi (X(T))$ in (\ref{equ1.1}), the above methods can still
go through.

\end{remark}

\section{Examples}

First we will give an example  associated with the model studied above.

\begin{example}\rm
Consider the following controlled system ($m=d=1$):
\begin{equation}
\left\{
\begin{array}{llll}
X(t)  =  \int_{0}^{T}tu(s)dB_{s},\ \  &  \\
Y(t)  =  \psi (t)+{ \int_{t}^{1}}(t-1)u(s)ds-{
\int_{t}^{1}}Z(t,s)dB_{s},\ t\in \lbrack 0,1], &
\end{array}%
\right.   \label{equ5.5}
\end{equation}%
with the control domain
\begin{equation*}
\mathcal{U}=\{ (\psi(\cdot) ,u(\cdot))|\psi (\cdot )\in L_{\mathcal{F}%
_{T}}^{2}(0,1),\  u(\cdot )\in L_{\mathbb{F}%
}^{2}(0,1),\ \psi (t )\in \lbrack 0,1],\ u(t )\in \lbrack
-\frac{1}{2},1],\ a.e., a.s.\}
\end{equation*}%
and the objective function
\begin{equation}
J(\psi(\cdot) ,u(\cdot))=E\left\{ X(1)^{2}+Y(0)\right\} .
\label{equ6.5}
\end{equation}%
We will minimize the objective function under the constraints
$(\psi(\cdot) ,u(\cdot))\in \mathcal{U}.$ After substituting $X(1),\
Y(0)$ into the objective function, we get
\begin{equation}
J(\psi(\cdot) ,u(\cdot))=E[ { \int_{0}^{1}}u(s)^{2}ds+\psi (0)-{%
 \int_{0}^{1}}u(s)ds] .  \label{equ7.5}
\end{equation}%
From (\ref{equ7.5}), we obtain the optimal control:
\begin{equation*}
\psi ^{\ast }(s)=%
\begin{cases}
0, & s=0, \\
\mbox{values in }\lbrack 0,1], & s\in (0,1],%
\end{cases}%
\quad u^{\ast }(s)=\frac{1}{2},\ s\in \lbrack 0,1].
\end{equation*}%
So, $\min\limits_{(\psi(\cdot) ,u(\cdot))\in \mathcal
{U}}J(\psi(\cdot) ,u(\cdot))=-\frac{1}{4}.$
\end{example}

At last  we give an example to show the form of the optimal terminal $\psi (\cdot )$.

\begin{example}\rm
For convenience, we suppose $m=d=1,$ and consider a simple BSVIE as follows:
\begin{equation*}
\begin{array}{llll}
Y(t)=\psi (t)+{ \int_{t}^{1}}[AY(s)+BZ(s,t)]ds-{
\int_{t}^{1}}Z(t,s)dB_{s},\quad t\in \lbrack 0,1], &  &  &
\end{array}%
\end{equation*}%
$A,\ B\in \mathbb{R}.$ We will maximize the objective function $J(\psi(\cdot) )=%
\frac{1}{2}E\left[ { \int_{0}^{1}}\psi (s)^{2}ds\right]
$, subject to $\psi (\cdot )\in L_{\mathcal{F}_{T}}^{2}(0,1),\ \psi
(t )\in \lbrack 0,1],\ EY^{\psi }(t)=\rho
(t),\ t\in \lbrack 0,1],\ a.e., a.s.$

From Subsection 3.4, we know the adjoint process $p(\cdot)$
satisfies
\begin{equation*}
\begin{array}{llll}
p(t) & = & h_0(t)+(A+B)\bar{h}_0 +\int_{0}^{T}Ap(s)ds+{%
 \int_{0}^{t}}Bp(s)dB_s,\quad t\in[0,1]. &
\end{array}%
\end{equation*}
Applying Theorem \ref{TH1}, we have, if $\psi^\ast(\cdot)$ is
optimal to $J(\psi(\cdot))$, then there exists  a
deterministic function $h_0(\cdot)$, and $\bar{h}_1\leq0,\
|h_0(\cdot)|+|\bar{h}_1|\neq0$ such that, for any $\psi(\cdot),$
\begin{equation*}
(p(t)+\bar{h}_1\psi^\ast(t))(\psi(t)-\psi^\ast(t))\geq0,\
t\in[0,1],\ a.s.
\end{equation*}
Similar to the example in Ji, Zhou \cite{SJXZ}, let $\Omega_1:=
\{(\omega,t)\in\Omega \times[0,1]| \psi^\ast(t,\omega)=0\}$,
$\Omega_2:=
\{(\omega,t)\in\Omega \times[0,1]| \psi^\ast(t,\omega)=1\}$, we obtain $%
\psi^\ast(\cdot)$ satisfies
\begin{equation*}
\begin{array}{llll}
&  & p(t)+\bar{h}_1\psi^\ast(t)\geq0,\ (\omega,t)\in \Omega_1,\ a.s. &  \\
&  & p(t)+\bar{h}_1\psi^\ast(t)\leq0,\ (\omega,t)\in \Omega_2,\ a.s. &  \\
&  & p(t)+\bar{h}_1\psi^\ast(t)= 0,\ (\omega,t)\in
\Omega-\Omega_1-\Omega_2,\ a.s. &
\end{array}%
\end{equation*}
\end{example}

\section*{Acknowledgements}
The authors thank Prof. Jiongmin Yong and Dr. Tianxiao Wang for their
helpful suggestions and discussions.

This work was supported by National Natural Science Foundation of
China (No. 11171187, No. 10871118 and No. 10921101); supported by
the Program of Introducing Talents of Discipline to Universities of
China (No.B12023); supported by Program for New Century Excellent
Talents in University of China.

\end{document}